\pdfoutput=1
\documentclass[12pt]{article}

\usepackage{amscd}
\usepackage{amsfonts}
\usepackage{amsmath}
\usepackage{amssymb}
\usepackage{graphicx}
\usepackage{latexsym}
\usepackage{mathrsfs}
\usepackage{theorem}
\usepackage{color}

\textheight 22.5cm\topmargin -0.4 in
\textwidth 16.8cm\oddsidemargin 0in\evensidemargin 0in

%
%

\newcommand{\lamia}[1]{{\color{red} #1 \color{black}}}

\newcommand{\g}{\mathfrak{g}}
\newcommand{\p}{\mathfrak{p}}

\newcommand{\R}{{\mathbb{R}}}
\renewcommand{\H}{{\mathbb{H}}}

\newcommand{\C}{{\mathbb{C}}}
\newcommand{\I}{{\mathbb{I}}}

\newcommand{\CP}{{\mathbb{C}}{{P}}}

\newcommand{\beq}{\begin{equation}}
\newcommand{\eeq}{\end{equation}}
\newcommand{\bea}{\begin{eqnarray}}
\newcommand{\eea}{\end{eqnarray}}
\newcommand{\ben}{\begin{eqnarray*}}
\newcommand{\een}{\end{eqnarray*}}
\newcommand{\bem}{\begin{enumerate}}
\newcommand{\eem}{\end{enumerate}}

\newcommand{\ra}{\rightarrow}

\newcommand{\hra}{\hookrightarrow}

\newcommand{\cd}{\partial}
\newcommand{\wt}{\widetilde}

\newcommand{\less}{\backslash}

\newcommand{\mmm}{\wt{M}_2}

\newcommand{\rat}{{\sf Rat}}

\newcommand{\so}{{\mathfrak{so}}}

\def \d{\mathrm{d}}

\newcommand{\ignore}[1]{}

\renewcommand{\aa}{\mathscr{A}}

\renewcommand{\aa}{{\cal A}}

\newcommand{\ol}{\overline}

\newcommand{\sigvec}{\mbox{\boldmath{$\sigma$}}}
\newcommand{\lamvec}{\mbox{\boldmath{$\lambda$}}}

\newcommand{\Omegavec}{\mbox{\boldmath{$\Omega$}}}

\newcommand{\Evec}{\mbox{\boldmath{$E$}}}

\newcommand{\zerovec}{\mbox{\boldmath{$0$}}}

\newcommand{\eps}{\varepsilon}

\theoremstyle{plain}
\newtheorem{thm}{Theorem}

\newtheorem{lemma}[thm]{Lemma}
\newtheorem{prop}[thm]{Proposition}
\newtheorem{proposition}[thm]{Proposition}
\newtheorem{cor}[thm]{Corollary}
\newtheorem{corollary}[thm]{Corollary}

{\theorembodyfont{\rmfamily}

\newtheorem{remark}[thm]{Remark}

}

\newcommand{\news}{\setcounter{equation}{0}}
\newenvironment{proof}{\noindent{\it Proof:\, }}{\hfill$\Box$\vspace*{0.5cm}
}

\renewcommand{\theequation}{\thesection.\arabic{equation}}

\begin{document}

\title{Ricci magnetic geodesic motion of vortices and lumps}
\author{
L.S. Alqahtani\thanks{\tt lalqahtani@kau.edu.sa} \\
Department of  Mathematics, King Abdulaziz University\\
Jeddah, KSA
\and
J.M. Speight\thanks{\tt speight@maths.leeds.ac.uk}\\
School of Mathematics, University of Leeds\\
Leeds LS2 9JT, UK
}

\maketitle

\begin{abstract}
Ricci magnetic geodesic (RMG) motion in a k\"ahler manifold is the analogue of geodesic motion in the presence
of a magnetic field proportional to the ricci form. It has been conjectured to model low-energy dynamics of
vortex solitons in the presence of a Chern-Simons term, the k\"ahler manifold in question being the $n$-vortex
moduli space. This paper presents a detailed study of RMG motion in soliton moduli spaces, focusing on the cases of
hyperbolic vortices and spherical $\CP^1$ lumps. It is shown that RMG flow localizes on fixed point sets of groups
of holomorphic isometries, but that the flow on such submanifolds does not, in general, coincide with their intrinsic
RMG flow. For planar vortices, it is shown that RMG flow differs from an earlier reduced dynamics proposed by Kim and Lee,
and that the latter flow is ill-defined on the vortex coincidence set. An explicit formula for the metric on the whole
moduli space of hyperbolic two-vortices is computed (extending an old result of Strachan's), and RMG motion of centred
two-vortices is studied in detail. Turning to lumps, the moduli space of static $n$-lumps is $\rat_n$, the space of degree $n$
rational maps, which is known to be k\"ahler and geodesically incomplete. It is proved that $\rat_1$ is, somewhat
surprisingly, RMG complete (meaning that that the initial value problem for RMG motion has a global solution for all
initial data). It is also proved that the submanifold of rotationally equivariant $n$-lumps, $\rat_n^{eq}$, a topologically
cylindrical surface of revolution, is intrinsically RMG incomplete for $n=2$ and all $n\geq 5$, but 
that the extrinsic RMG flow
on $\rat_2^{eq}$ (defined by the inclusion $\rat_2^{eq}\hra\rat_2$) is complete.
\end{abstract}

\maketitle

\section{Introduction}
\label{sec:intro}

Let $(M,g,J)$ be a k\"ahler manifold with Ricci form $\rho$, that is,
$\rho (X,Y)=\text{Ric}(JX,Y)$ where $\text{Ric}$ denotes the Ricci tensor defined by $g$. A smooth curve
$\alpha:I\ra M$ is {\em Ricci magnetic geodesic} if
\begin{equation}\label{rmg}
\nabla^\alpha _{d/dt}\dot{\alpha }= \lambda \;\sharp \iota _{\dot{\alpha }} \rho ,
\end{equation}
where $\nabla^\alpha$ is the pullback of the Levi-Civita connexion on $M$ to
$\alpha^{-1}TM$, $\sharp:T^*M\ra TM$
denotes the metric isomorphism, 
$\iota$ denotes interior product, and $\lambda\in\R$ is
a constant parameter. 
We shall call such a curve RMG, or RMG${}_\lambda$ if we wish to
emphasize the role of the parameter $\lambda$.
This is an example of magnetic geodesic flow, that is,
motion of a charged particle, of electric charge $\lambda$, under the influence
of a magnetic field, in this case, the two-form $\rho$. 
Note that the flow
reduces to conventional geodesic motion if $\lambda=0$, and that, in all cases,
RMG curves have constant speed, since
\beq
\frac{d}{dt} \|\dot{\alpha }(t)\|^2
=2 g(\dot{\alpha }(t),\lambda \sharp\, \iota _{\dot{\alpha }(t)}\rho)
=2 \lambda \rho (\dot{\alpha }(t),\dot{\alpha }(t))=0.\label{rmg3}
\eeq
Unlike geodesics, RMG curves depend on the length, and not just the direction,
of their initital velocity.
Clearly, $\alpha(t)$ is RMG${}_\lambda$ if and only if $\wt\alpha(t)=
\alpha(\lambda_*t)$ is RMG${}_{\lambda_*\lambda}$, so we may, without loss of 
generality, scale $\lambda$ to any convenient value, or leave $\lambda$ general
and consider only RMG curves of unit speed. 

RMG flow was first proposed by Collie and Tong \cite{colton}
as a model of the low-energy dynamics of vortex 
solitons in a certain Chern-Simons variant \cite{leeleemin}
of the abelian Higgs model on
$\R^2$. In this setting, $M=M_n\equiv\C^n$, 
the moduli space of static $n$-vortex
solutions of the (usual) abelian Higgs model, and $g$ is its $L^2$ metric.
In the limit $\lambda\ra 0$, one recovers geodesic motion on $M_n$, a 
well-studied problem \cite{sam} which is rigorously known to approximate
low-energy vortex dynamics in the absence of a Chern-Simons term
\cite{stu-vortex}. RMG flow may thus be regarded as a geometrically natural
perturbation of the geodesic approximation of Manton \cite{mansut}, arising
from the inclusion of a Chern-Simons term. 
Low-energy vortex dynamics in this system was previously studied by
Kim and Lee \cite{kimlee}, who, by
a direct perturbative calculation, derived a structurally similar magnetic
geodesic flow on $M_n$. Indeed, Collie and Tong assert \cite{colton}
that the Kim-Lee flow actually {\em is} RMG flow, and that their own
contribution is to generalize and
give it both a geometric interpretation, and an 
alternative (rather indirect) derivation. In fact, we will see
that the Kim-Lee flow on $M_n$ is {\em not} RMG flow, as claimed in 
\cite{colton}, and, further, is not a well-defined flow on $M_n$ at all,
since it is singular on the vortex coincidence set.

In this paper we present a detailed study of RMG flow on the moduli spaces
of abelian Higgs 
vortices and $\CP^1$ lumps. For vortices on $\R^2$, we show that
the Kim-Lee flow is ill-defined on the subset of $M_n$ where two or more
vortices coincide, and hence that this flow cannot coincide with RMG flow 
which is, perforce, globally well-defined. We then consider the model
on the hyperbolic plane of critical curvature, where the vortex equations
are integrable \cite{wit}, and exact $n$-vortex solutions can be written
down. By a careful analysis of the isometric action of $SL(2,\R)$ on
$M_2$, we find an exact formula for its $L^2$ metric, generalizing 
results of Strachan \cite{str}, who computed the induced metric on 
two different two-dimensional submanifolds of $M_2$. We then study
RMG flow on the submanifold of centred two-vortices $M_2^0$
in detail, showing that,
contrary to a claim of one of us in \cite{kruspe-vortex}, this does
{\em not} coincide with the intrinsic RMG flow on $M_2^0$ -- in fact, the two
flows exhibit qualitative differences.

We go on to study RMG flow on $\rat_n$, the space of degree $n$ holomorphic
maps $S^2\ra S^2$ (or, equivalently, the moduli space of $n$ $\CP^1$ lumps on
$S^2$) equipped with its $L^2$ metric. This geometry arises as the
infinite electric charge limit of a certain semi-local vortex model \cite{bap-l2,liu},
so the RMG flow may be relevant to the low energy dynamics of such 
vortices in the presence of a Chern-Simons term. However, our main interest
in it concerns the question of completeness.

Since RMG flow proceeds with constant speed, it is immediate that RMG flow
on any geodesically (or, equivalently, metrically) complete k\"ahler manifold
is {\em complete}, that is, given any initial data $x\in M$, $v\in T_xM$, 
there is a corresponding RMG curve 
$\alpha:\R\ra M$ (well-defined for all times $t\in\R$) with 
$\alpha(0)=x$, $\dot\alpha(0)=v$. The converse
question is nontrivial, however. If a k\"ahler manifold is RMG complete,
does it follow that it is geodesically complete? The time-scaling
properties of RMG flow noted above led one of us to conjecture, in 
\cite{kruspe-vortex}, that the answer is yes: if $M$ is
RMG complete then all RMG${}_\lambda$ curves exist for all time and all 
$\lambda$, and RMG flow tends to geodesic flow as $\lambda\ra 0$ (or, 
equivalently, as speed tends to infinity), so it seems plausible that geodesics 
should likewise exist for all time. 
In fact, this conjecture is false, and $\rat_1$ provides a counterexample:
it is known \cite{spe-l2} to
be k\"ahler and geodesically incomplete but, as will be shown, is RMG 
complete. The point is that the Ricci curvature of $\rat_1$ grows unbounded
as one approaches its boundary at infinity so, even though this boundary
lies at finite distance, an unbounded ``magnetic field'' deflects any 
``charged'' particle from hitting it in finite time. We conjecture
that $\rat_n$ is RMG complete for all $n\geq 2$ also, despite being
geodesically incomplete \cite{sadspe}, and present some evidence in 
favour of this conjecture. 

The rest of this paper is structured as follows. In section \ref{sec:rmg}
we present some generalities on RMG flow on k\"ahler manifolds, including a
useful symmetry reduction lemma. In section \ref{sec:vortex} RMG flow
on vortex moduli spaces is studied, first for vortices on $\R^2$, then
on the hyperbolic plane. In section \ref{sec:lump} RMG flow on $\rat_n$ is
studied, focusing on $\rat_1$. Finally, section \ref{sec:conc} presents
some concluding remarks.

\section{RMG flow}
\label{sec:rmg}
\news

We have already noted that RMG flow (like any magnetic geodesic flow)
conserves speed. Since the Ricci form of a k\"ahler manifold is closed,
one can locally express $\rho$ as $\d \aa$, for some locally defined one-form
$\aa$ on $M$. Then RMG flow has a local Lagrangian formulation,
\beq
L=\frac12\|\dot\alpha(t)\|^2-\lambda \aa(\dot\alpha(t)),
\eeq
that is, $\alpha:[a,b]\ra M$ is RMG if and only if it locally
extermizes $S=\int_a^b Ldt$ among all paths with fixed endpoints.
If $H^2(M)=0$, as in all cases of interest in this paper, this formulation is
actually global. We shall use this fact repeatedly.

\ignore{
In complete generality
(that is, whether or not $\rho$ is exact), RMG flow has a global
hamiltonian formulation: it is flow in $T^*M$ along the symplectic gradient
of
\beq
H=\frac1\|\flat\dot\alpha(t)\|^2
\eeq
with respect to the deformed symplectic form
\beq
\omega_\lambda=\omega_{can}-\lambda\pi^*\omega
\eeq
where $\omega_{can}$ is the canonical symplectic form on $T^*M$, 
$\flat:TM\ra T^*M$ is the metric isomorphism induced by $g$ (that is, 
$\flat X(Y)=g(X,Y)$), $\sharp=\flat^{-1}$  and 
$\omega(\cdot,\cdot)=g(J\cdot,\cdot)$
is the k\"ahler form on $M$. This formulation is a convenient 
starting point for quantizing RMG flow, but will not be used further here. 
}

Unlike geodesics, RMG curves are not invariant under time reversal, and
local isometries
do not necessarily map RMG curves to RMG curves.
However, {\em holomorphic} local isometries do preserve RMG curves:

\begin{proposition}\label{totrmg1}
Let $\varphi  :M\rightarrow N$ be a holomorphic
local isometry between two K\"{a}hler manifolds $M$ and $N$ and 
$\alpha:I\ra M$ be an RMG curve on $M$. Then, $\varphi\circ\alpha$ 
is an RMG curve on $N$.
\end{proposition}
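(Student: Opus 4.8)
The plan is to verify the defining equation \eqref{rmg} for $\varphi\circ\alpha$ directly, using the two characteristic properties of $\varphi$: it is a local isometry (so it intertwines the Levi-Civita connexions and preserves the metric) and it is holomorphic (so $\d\varphi\circ J_M = J_N\circ\d\varphi$, whence $\varphi$ pulls back the Ricci form of $N$ to that of $M$). First I would recall the standard naturality fact that for a local isometry the pullback connexion is intertwined: writing $\beta=\varphi\circ\alpha$, one has $\nabla^\beta_{d/dt}\dot\beta = \d\varphi\bigl(\nabla^\alpha_{d/dt}\dot\alpha\bigr)$, since $\varphi$ is a local isometry and both connexions are the Levi-Civita connexions of the respective metrics (a local diffeomorphism identifies a neighbourhood isometrically, and the Levi-Civita connexion is determined by the metric). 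Applying $\d\varphi$ to \eqref{rmg} for $\alpha$ then gives $\nabla^\beta_{d/dt}\dot\beta = \lambda\,\d\varphi\bigl(\sharp_M\iota_{\dot\alpha}\rho_M\bigr)$, and it remains to identify the right-hand side with $\lambda\,\sharp_N\iota_{\dot\beta}\rho_N$.

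For that identification I would argue in two steps. Since $\varphi$ is holomorphic and isometric, it is a local Kähler isometry, so $\mathrm{Ric}_N$, $J_N$ and hence the Ricci form $\rho_N$ pull back under $\varphi$ to $\mathrm{Ric}_M$, $J_M$, $\rho_M$; concretely $\rho_M = \varphi^*\rho_N$, i.e. $\rho_M(X,Y) = \rho_N(\d\varphi X,\d\varphi Y)$ for all $X,Y\in T_pM$. Consequently, for any $X\in T_pM$,
\[
g_M\bigl(\sharp_M\iota_{\dot\alpha}\rho_M,\,X\bigr) = \rho_M(\dot\alpha,X) = \rho_N(\d\varphi\,\dot\alpha,\d\varphi\,X) = g_N\bigl(\sharp_N\iota_{\dot\beta}\rho_N,\,\d\varphi\,X\bigr) = g_M\bigl(\d\varphi^{-1}\sharp_N\iota_{\dot\beta}\rho_N,\,X\bigr),
\]
using that $\varphi$ is a local isometry in the last step (so $\d\varphi$ is invertible near $p$ and $g_M(\d\varphi^{-1}U,X)=g_N(U,\d\varphi X)$). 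Since $X$ is arbitrary and $g_M$ is nondegenerate, $\d\varphi\bigl(\sharp_M\iota_{\dot\alpha}\rho_M\bigr) = \sharp_N\iota_{\dot\beta}\rho_N$, which combined with the previous display yields exactly \eqref{rmg} for $\beta=\varphi\circ\alpha$.

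The only genuinely delicate point is the first step, the naturality of the pullback connexion under a local isometry: one must be careful that $\varphi$ is merely a local isometry, not a global one, but this suffices because the statement is local along the curve and one can work in a neighbourhood of each point $\alpha(t_0)$ on which $\varphi$ restricts to an isometric diffeomorphism onto its image. Everything else is a routine unwinding of definitions ($\iota$, $\sharp$, $\rho$) together with the fact — which I would state as a preliminary remark — that a holomorphic local isometry between Kähler manifolds automatically intertwines the Ricci forms, since the Ricci tensor is built metrically and $J$ is intertwined by hypothesis. I would present the argument in these three short steps rather than as a single coordinate computation, since it makes transparent that holomorphicity is needed only to handle the $\rho$-term and isometry is needed both there and for the connexion term.
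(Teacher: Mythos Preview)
Your proof is correct and follows essentially the same approach as the paper: both use naturality of the Levi-Civita connexion under a local isometry to handle the left-hand side, and then test the right-hand side against an arbitrary tangent vector, invoking isometry (to transport Ricci) and holomorphicity (to commute $J$ through $\d\varphi$) in order to identify $\d\varphi(\sharp_M\iota_{\dot\alpha}\rho_M)$ with $\sharp_N\iota_{\dot\beta}\rho_N$. The only cosmetic difference is that the paper tests against $\d\varphi X$ in $g_N$, whereas you test against $X$ in $g_M$ after first packaging the curvature step as $\varphi^*\rho_N=\rho_M$; the logical content is identical.
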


\begin{proof}
Let $\nabla $ and $\bar{\nabla }$ be the Levi-Civita connexions with respect to the k\"{a}hler metrics $g_M$ and $g_N$ on $M$ and $N$, respectively. 
Similarly, denote by $\rho_M,\rho_N$ and $J_M,J_N$ the Ricci forms and
almost complex structures on $M,N$.
Let $\alpha:I\ra M$ be an RMG  curve and $\wt\alpha=\varphi\circ\alpha:I\ra N$. 
Since $\varphi:M\rightarrow N $ is an 
isometry, then \cite{Oneill}
 \begin{equation}\label{iso1}
\d\varphi(\nabla^\alpha_{d/dt}\dot\alpha)=\bar\nabla^{\wt\alpha}_{d/dt}\dot{\wt\alpha}.
\end{equation}
Hence, for any $X\in \Gamma (TM)$,
\bea
g_N(\bar{\nabla }_{d/dt}^{\wt\alpha }\dot{\wt\alpha }, d\varphi  X)
&=&g_N(d\varphi \bigl (\nabla _{d/dt}^{\alpha } \dot{\alpha }\bigr),d\varphi  X)
=g_M(\nabla _{d/dt}^{\alpha }\dot{\alpha}, X)\nonumber \\
&=&g_M(\lambda \sharp_M \iota _{\dot{\alpha }}\rho_M ,X)\qquad\mbox{since $\alpha$ is RMG}\nonumber \\
&=&\lambda \rho_M (\dot{\alpha },X)
=\lambda \text{Ric}_M(J_M\dot{\alpha },X)\nonumber \\
&=&\lambda \text{Ric}_N(\d\varphi J_M\dot\alpha,\d\varphi X)\qquad\mbox{since $\varphi$
is an isometry}\nonumber \\
&=&\lambda \text{Ric}_N(J_N\d\varphi\dot\alpha,\d\varphi X)\qquad\mbox{since $\varphi$
is holomorphic}\nonumber \\
&=& \lambda \rho_N(\dot{\wt\alpha},\d\varphi X)=
g_N(\lambda\sharp\iota_{\dot{\wt{\alpha}}}\rho_N,\d\varphi X).
\eea
But $g_N$ is nondegenerate and $\d\varphi$ surjective, so $\wt\alpha$ is
RMG.
\end{proof}


\begin{corollary}\label{totrmg3}\label{hatu}
Let $M$ be a connected component of a fixed point set of a group of holomorphic isometries of a k\"{a}hler manifold $\overline{M}$. Then, any RMG curve  $\alpha$ on $\overline{M}$  with initial data $\dot{\alpha }(0)\in T_{\alpha (0)}M$ remains on $M$.
\end{corollary}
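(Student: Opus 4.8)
The plan is to combine Proposition~\ref{totrmg1} with the uniqueness of solutions of the RMG equation regarded as a second-order ODE. Write $\ol M$ for the ambient K\"ahler manifold and $G$ for the group of holomorphic isometries, so that $M$ is a connected component of the fixed-point set $\ol M^G:=\{x\in\ol M:\varphi(x)=x\ \forall\varphi\in G\}$. I would first recall the classical fact that the fixed-point set of any family of isometries is a disjoint union of closed totally geodesic submanifolds, and that, when the isometries are holomorphic, each of these is moreover a complex (hence K\"ahler) submanifold; in particular $M\hra\ol M$ is an embedded submanifold, so that $T_pM$ is a well-defined subspace of $T_p\ol M$ for every $p\in M$ and the hypothesis $\dot\alpha(0)\in T_{\alpha(0)}M$ is meaningful.

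Now fix $\varphi\in G$. Since $\varphi$ fixes $M$ pointwise, $\varphi|_M=\id$, and hence for any $p\in M$ its differential restricts to the identity on $T_pM$, i.e.\ $\d\varphi_p|_{T_pM}=\id$. Let $\alpha:I\ra\ol M$ be an RMG curve with $p:=\alpha(0)\in M$ and $v:=\dot\alpha(0)\in T_pM$. Applying Proposition~\ref{totrmg1} with $N=\ol M$ and the holomorphic isometry $\varphi$, the curve $\varphi\circ\alpha$ is again RMG on $\ol M$; its initial data are $(\varphi\circ\alpha)(0)=\varphi(p)=p$ and $\tfrac{d}{dt}\big|_{t=0}(\varphi\circ\alpha)=\d\varphi_p v=v$, the same as those of $\alpha$. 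Since (\ref{rmg}) is, in any chart, a system of second-order ODEs whose right-hand side depends smoothly on position and velocity, its solutions are uniquely determined by their initial position and velocity; hence $\varphi\circ\alpha=\alpha$ on $I$, i.e.\ $\alpha(t)\in\mathrm{Fix}(\varphi)$ for all $t\in I$. As $\varphi\in G$ was arbitrary, $\alpha(I)\subseteq\ol M^G$. Finally $\alpha(I)$, being the continuous image of an interval, is a connected subset of $\ol M^G$ which meets the component $M$ at $\alpha(0)$, so $\alpha(I)\subseteq M$, as claimed.

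I do not anticipate any real difficulty. The two points that deserve a word of justification are that $\ol M^G$ is genuinely a submanifold (so that the statement makes sense) and that $\d\varphi_p$ acts as the identity on $T_pM$, both of which are immediate from $\varphi|_M=\id$; everything else is Proposition~\ref{totrmg1} together with standard ODE uniqueness. The only mild subtlety worth flagging is that Proposition~\ref{totrmg1} is invoked in the slightly degenerate case $M=N=\ol M$, which is nonetheless covered by its statement.
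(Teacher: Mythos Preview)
Your proof is correct and follows essentially the same approach as the paper's: apply Proposition~\ref{totrmg1} to each $\varphi\in G$, match initial data, and invoke ODE uniqueness to conclude $\varphi\circ\alpha=\alpha$, hence $\alpha$ lies in the fixed-point set. The paper phrases the tangent-space condition via $V_p=\{u:d\varphi_p u=u\ \forall\varphi\in G\}=T_pM$ (citing \cite{Berndt}) rather than your $d\varphi_p|_{T_pM}=\id$, and is less explicit about the connectedness step at the end, but the argument is the same.
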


\begin{proof}
Let $G$ be a group of holomorphic isometries from $\overline{M}$ to itself and let  $M$ be  a connected component of the fixed point set of  $G$.  Let also 
\begin{equation}\label{tot1}
 V_{p}=\{u\in T_{p}\overline{M}: d\varphi  _{p}u=u,\; \forall \varphi  \in G\},\qquad \forall \;p\in M.
\end{equation}
We know that  $M$ is a totally geodesic submanifold of $\overline{M}$ and $T_{p}M=V_{p}$ for all  $p\in M$ \cite{Berndt}.  Now, let $\alpha:I\rightarrow M$ be an  RMG curve on $\overline{M}$ with initial data
\begin{equation}\label{datapv}
\alpha (0)=p\in M,\qquad \dot{\alpha }(0)=v\in T_pM.
\end{equation}
By Proposition \ref{totrmg1}, for  all $\varphi \in G$, the curve   $(\varphi \circ \alpha )(t)$ is RMG on $\overline{M}$. But its initial data are 
\begin{equation}
(\varphi \circ \alpha )(0)=\varphi (p)=p\in M,\qquad (d\varphi \dot{\alpha })(0)=d\varphi _p(v)=v\in T_pM.
\end{equation}
Thus, both $\alpha (t)$ and  $(\varphi \circ \alpha )(t)$ satisfy the RMG equation on $\overline{M}$ with the same initial data, and so  by standard existence and uniqueness theory for ODEs, 
\begin{equation}\label{rmgrmg}
(\varphi \circ \alpha )(t)=\alpha (t),\qquad \forall \;\varphi \in G.
\end{equation}
Hence, $\alpha (t)\in M$  for all time.
\end{proof}

\begin{remark}\label{totrmg4}
One can see that the connected component $M$ of the fixed point set of a group $G$ of holomorphic isometries on a 
k\"{a}hler manifold $\overline{M}$ is a complex submanifold of $\overline{M}$, and so is k\"{a}hler. This follows since
 for all $u\in T_p M=V_p$ and all $\varphi \in G$, 
\begin{equation}\label{tot2}
J_{p}u=J_{p}\:(d\varphi  _{p}u)=d\varphi  _{p}\:(J_{p}u),
\end{equation}
so,  $J_{p}u\in V_p= T_{p}M$ for all $u\in T_{p}M$. It follows that there are two different RMG flows on $M$: the 
original RMG flow on $\ol{M}\supset M$, which preserves $M$, and the RMG flow on $M$ defined by its own
Ricci form $\rho_M$. We shall call these the {\em extrinsic} and {\em intrinsic} RMG flows on $M$ respectively.
Since $\iota^*\rho_{\ol{M}}\neq \rho_M$ in general (where $\iota:M\ra \ol{M}$ denotes inclusion), these two flows on
$M$ do {\em not} coincide in general. 
\end{remark}

\begin{remark}\label{jama}
 For two-dimensional k\"{a}hler manifolds, the RMG equation (\ref{rmg}) simplifies to  
\begin{equation}\label{rmgdim2}
\nabla ^\alpha _{d/dt}\;\dot{\alpha }=\lambda \,\frac{ S }{2}\;J \dot{\alpha },
\end{equation}
where  $S$ denotes  the scalar curvature of $M$. Choosing $\|\dot\alpha\|=1$ for convenience, one
sees that RMG${}_\lambda$ curves are precisely those curves whose geodesic curvature is $\lambda$ times the
Gauss curvature of $M$.
\end{remark}


\section{RMG motion of vortices}
\label{sec:vortex}
\news

The field theory of interest is defined on spacetime $\R^3$ given a Lorentzian metric 
$\eta=dt^2-\Omega(x,y)^2(dx^2+dy^2)$. The conformal factor $\Omega$ will later be chosen so that the spacelike
slice $t=0$ is either the euclidean plane or the hyperbolic plane, but it is convenient to leave it arbitrary at 
first. The theory has, like the abelian Higgs model,
a complex scalar field $\phi$ minimally coupled to a $U(1)$ gauge connexion $A=A_\mu dx^\mu$. It has, in addition, a neutral
(real) scalar field $N$. Its lagrangian density is
\begin{align}
{\cal L}=\frac{1}{2} & \biggl( D_\mu \phi  \overline{D^\mu \phi} -\frac{1}{2} F_{\mu \nu } F^{\mu \nu }+\kappa\;  \epsilon ^{\mu \nu \rho } A_\mu  \partial _\nu A_\rho \nonumber 
+ \partial _\mu N\partial ^\mu N \\
&- \frac{1 }{4} (\left| \phi \right| ^2-1-2\kappa  N)^2+ \left| \phi \right| ^2 N^2 \biggr)\label{csL} 
\end{align}
where $D\phi=\d\phi-iA\phi$, $F=\d A$ and $\kappa$ is a real parameter (the Chern-Simons constant)
which, at the cost of the redefinitions
$t\mapsto -t$, $N\mapsto-N$ if necessary, we may assume is non-negative. 

In order to have finite total energy
\bea
E&=&\frac12\int\bigg(\Omega^2|D_0\phi|^2+|D_1\phi|^2+|D_2\phi|^2+\Omega^{-2}F_{12}^2+F_{0i}F_{0i}
+\Omega^2\cd_0N^2
+\cd_iN\cd_iN \nonumber \\
&&\qquad\qquad\qquad+\frac{1 }{4} \Omega^2(\left| \phi \right| ^2-1-2\kappa  N)^2- \Omega^2\left| \phi \right| ^2 N^2\bigg) dx dy
\eea
the fields $\phi,N$ must have boundary behaviour $|\phi|\ra 1$, $N\ra 0$, or
$\phi\ra 0$, $N\ra-(2\kappa)^{-1}$ as $r=\sqrt{x^2+y^2}\ra\infty$. 
We choose the first possibility, as this allows vortex solutions. Then,
as usual \cite{mansut}, the Higgs field at spatial infinity winds some integer $n$ times around the unit circle
in $\C$, and the total magnetic flux of the field is quantized
\beq
\int Bdxdy=2\pi n
\eeq
where the magnetic field is $B=F_{12}$. There is a Bogomoln'yi
argument \cite{leeleemin} which shows that among all stationary fields (meaning $\cd_0N=0$, $\cd_0\phi=0$) of winding 
$n$, 
\beq
E\geq \pi n
\eeq
with equality if and only if
\begin{align}
(D_1 \pm  iD_2) \phi &=0\qquad & B\pm  \frac{\Omega ^2}{2}(\left| \phi \right| ^2-1-2\kappa N)&=0\label{bog12}\\
A_0\mp   N&=0\qquad & \cd_i\cd_i A_0 -\Omega ^2\left| \phi \right| ^2 A_0- \kappa  B&=0,\label{bog34}
\end{align}
where the upper (lower) signs apply if $n$ is positive (negative). 
A formal index calculation indicates the space of (gauge equivalence classes of) winding $n$ solutions of 
(\ref{bog12}),(\ref{bog34}) has real dimension $2n$ \cite{leeminrim}.

The top two equations (\ref{bog12})
reduce to the usual Bogomol'nyi equations for vortices when $\kappa=0$, and in this case the bottom two
equations (\ref{bog34}) are trivially satisfied by $A_0=N=0$. It follows that, when $\kappa=0$, the moduli space of 
winding $n$ solutions of (\ref{bog12}),(\ref{bog34}) is precisely $M_n$, the space of abelian Higgs $n$-vortices
\cite{tau}. Recall that such vortices are in one-to-one correspondence with unordered $n$-tuples of points in
$\R^2=\C$, the zeros, with multiplicity, of the field $\phi$,
and that their low-energy dynamics is governed by geodesic motion in $M_n$ with respect to $\gamma_{L^2}$,
the $L^2$
metric \cite{str,sam}. There is a useful semi-explicit formula for this metric due to Strachan \cite{str} (on
the hyperbolic plane) and Samols \cite{sam} (on the euclidean plane). Let $\Delta_n$ denote the subset of
$M_n$ on which two or more vortex positions (zeros of $\phi$) coincide. Then on $M_n\less\Delta_n$ we may use 
the zeroes of $\phi$, $(z_1,z_2,\ldots,z_n)\in\C^n$ as local complex coordinates for $M_n$. For a fixed set of vortex positions,
we can expand $\log|\phi(z)|^2$ in a neighbourhood of each $z_r$, $r=1,\ldots,n$,
\beq
\log|\phi(z)|^2=\log|z-z_r|^2+a_r+\frac12\left\{b_r(z-z_r)+\ol{b}_r(\ol{z}-\ol{z}_r)\right\}+\cdots
\eeq
where $b_r$ are $n$ unknown complex functions of $(z_1,\ldots,z_n)$, and $a_r$ are, similarly, unknown real functions. 
Then the metric on $M_n\less\Delta_n$ is
\begin{equation}\label{hy12}
\gamma _{L^2}= \pi \sum_{r,s=1}^n \biggl(\Omega ^2  \delta _{rs}+2 \frac{\partial b_s}{\partial z_r}\biggr)\;dz_r\,d\bar{z}_s.
\end{equation}

Following Kim and Lee \cite{kimlee} and Collie and Tong \cite{colton} we assume that, 
for $\kappa>0$ but small, $n$-vortex solutions of (\ref{bog12}),(\ref{bog34}) remain in bijective correspondence with
unordered $n$-tuples of points in $\C$, and hence with points in $M_n$, and that their low energy dynamics is
described by some perturbed geodesic motion in $(M_n,\gamma_{L^2})$. Collie and Tong propose RMG${}_{\lambda}$ flow
on $M_n$ with $\lambda=2\pi\kappa$. 
Before examining this flow in detail, we consider Kim and Lee's earlier proposal.

\subsection{Kim-Lee flow on $M_2$}

Motivated by a direct perturbative calculation, Kim and Lee \cite{kimlee}
propose that low-energy vortex dynamics on the euclidean plane, for small $\kappa$, is
described by motion on $M_n$ governed by the Lagrangian
\beq
L=\frac12 \gamma_{L^2}(\dot\alpha,\dot\alpha)+\aa_1(\dot\alpha)+\aa_2(\dot\alpha)
\eeq
where $\aa_1,\aa_2$ are two one-forms on $M_n$, proportional to $\kappa$. This, then, is magnetic geodesic motion on $M_n$
in the effective magnetic field ${\cal B}=\d\aa_1+\d\aa_2$. On $M_n\less\Delta_n$, the 
one forms $\aa_1,\aa_2$ are, in terms of the (unknown)
functions $b_r$,
\bea
\aa_1&=&i\frac{\pi\kappa}{2}\left\{\sum_r(b_rdz_r-\ol{b_r}d\ol{z}_r)-2\sum_{r,s\neq r}\left(
\frac{dz_r}{z_r-z_s}-\frac{d\ol{z}_r}{\ol{z}_r-\ol{z}_s}\right)\right\}\label{keanwa1}\\
\aa_2&=&i\frac{\pi\kappa}{8}\sum_r(H_rdz_r-\ol{H_r}d\ol{z}_r)\label{keanwa2}\\
\mbox{where}\qquad H_r&=&-b_r+\sum_{s\neq r}\left\{(z_r-z_s)\frac{\cd b_r}{\cd z_s}+(\ol{z}_r-\ol{z}_s)\frac{\cd b_r}{\cd \ol{z}_s}
\right\}.
\eea
These formulae simplify considerably in the case $n=2$ (two-vortex dynamics). On $M_2\less\Delta_2$ we define
the centre of mass and relative coordinates 
\beq
Z=\frac12(z_1+z_2),\qquad\qquad\zeta=\sigma e^{i\theta}=\frac12(z_1-z_2)/2
\eeq
respectively. It is known
\cite{sam} that $b_1,b_2$ are functions of $\zeta$ only, and that
\beq\label{kaw}
b_1(\zeta)=b(\sigma)e^{-i\theta}=-b_2(\zeta)
\eeq
where $b(\sigma)$ is some smooth real function on $(0,\infty)$ with the asymptotic behaviour
\beq\label{kelannwat}
b(\sigma)=\frac{1}{\sigma}-\frac12\sigma+O(\sigma^2)
\eeq
as $\sigma\ra 0$. Substituting (\ref{kaw}) into (\ref{keanwa1}),(\ref{keanwa2}) one sees that
\bea
\aa_1&=&2\pi\kappa\left[1-\sigma b(\sigma)\right]\d\theta,\nonumber \\
\aa_2&=&\frac\pi2\kappa\sigma^2b'(\sigma)d\theta.
\eea
It follows that the effective magnetic field is
\beq
{\cal B}=\d(\aa_1+\aa_2)=f(\sigma)d\sigma\wedge\sigma d\theta
\eeq
where 
\beq
f(\sigma)=\frac{\pi\kappa}{\sigma}\frac{d\: }{d\sigma}\left(-2\sigma b(\sigma)+\frac12\sigma^2 b'(\sigma)\right).
\eeq
This formula defines the magnetic field, and hence the flow, on $M_2\less\Delta_2$. In order that the flow be well defined
on the whole of $M_2$, the two-form ${\cal B}$ should extend (at least) continuously to $\Delta_2$. We now show that ${\cal B}$
does {\em not} so extend.
 
Note that, by virtue of (\ref{kelannwat}), $f(\sigma)=\frac32\pi\kappa+O(\sigma^2)$ as $\sigma\ra 0$, that is, 
as the point in $M_2$ approaches 
$\Delta_2$.  Recall \cite{sam} that $\zeta$ is not
a globally well-defined coordinate on $M_2$ because $(Z,\zeta)$ and $(Z,-\zeta)$ correspond to exactly the same
point in $M_2$. To extend any geometric object on $M_2$ over the coincidence set $\Delta_2$, we must use the global
complex coordinates $Z$, $w=\zeta^2$. But then
\beq
{\cal B}=f(|w|^{1/2})\frac{i}{8|w|}dw\wedge d\ol{w}=\frac32\pi\kappa\left(\frac{1}{|w|}+O(1)\right)\frac{i}{8}dw\wedge d\ol{w}
\eeq
as $|w|\ra 0$. Hence ${\cal B}$ blows up on $\Delta_2$, which calls into question the self-consistency of Kim and Lee's
pertrurbative calculation \cite{kimlee}. 

Since $\gamma_{L^2}$ extends smoothly over $\Delta_n$ to give a global k\"ahler metric on $M_n$,  RMG flow is
globally well-defined on $M_n$. It follows that the Kim-Lee flow cannot, as claimed in \cite{colton}, coincide with
RMG flow.

\subsection{The metric on $M_2(\H^2)$}
\label{subsec:H2}

If we wish to study RMG motion of two-vortices on the euclidean plane, we need the coefficient function $b(\sigma)$
introduced in (\ref{kaw}), for which no explicit formula is known (although a conjectural large $\sigma$ asymptotic 
formula is known \cite{manspe}). One must resort to numerics even to construct the metric on $M_2$, therefore \cite{sam}.
Matters improve considerably if we consider vortices moving instead on the {\em hyperbolic} plane with scalar 
curvature $-1$, since the Bogmol'nyi equations (for $\kappa=0$) are then integrable \cite{wit}, and the semi-explicit
formula for $\gamma_{L^2}$ (\ref{hy12}) can, in some nontrivial cases, be made fully explicit \cite{str}. 
In this section we will derive an explicit formula
for the metric on $M_2$. Rather than appealing to (\ref{hy12}) directly, we will analyze the class of k\"ahler metrics on
$M_2$ with the same isometries as $\gamma_{L^2}$. This space of metrics is infinite dimensional, but the $L^2$ metric is
uniquely determined by its restriction to a certain pair of two-dimensional submanifolds of $M_2$, and these restrictions are 
already known \cite{str}. 

Let $\mmm$ denote the double cover of $M_2\less\Delta_2$, that is, $\mmm=\H^2\times\H^2\less\{(z,z)\: z\in\H^2\}$, and
$\wt\gamma$ be the pullback of $\gamma_{L^2}$ to $\mmm$ by the covering map.
It is convenient to use the upper half plane model for $\mathbb{H}^2$, that is, $\H^2=\{x+iy\in\C\::\: y>0\}$. 
with  the Riemannian metric
\begin{equation}\label{sapyou}
g=\frac{2}{y^2}\;(dx^2+dy^2).
\end{equation}
Then there is an isometric action of the projective real linear group  $PL(2,\mathbb{R})$ on $\H^2$, defined by
\begin{equation}\label{hyp0}
z\rightarrow \frac{a z +b}{c z +d}
=\begin{pmatrix} 
a &b\\
c & d
\end{pmatrix}
\odot  z =: M \odot z,
\end{equation}
where $[M]\in PL(2,\mathbb{R})$. This induces an isometric  action  on  $(\mmm,\wt{\gamma})$, 
\begin{equation}\label{hyp1}
(z_1,z_2)\rightarrow (M\odot z_1, M\odot z_2).
\end{equation}
For a generic element of $\mmm$, the isotropy group of $PL(2,\mathbb{R})$ is trivial. By the Orbit-Stabilizer Theorem \cite{Armstrong}, it follows that each generic orbit is diffeomorphic to $PL(2,\mathbb{R})$ itself. Hence, the isometric action of $PL(2,\mathbb{R})$ on $\mmm$ has cohomogeneity $1$, that is, all generic orbits are submanifolds of $\mmm$ with real codimension $1$. Let $s$ denote the distance between two vortices in  $\mathbb{H}^2$. Then each orbit has a unique element $w_s=(ie^{s/2},i e^{-s/2})\in \mmm$. Thus, this action decomposes $\mmm$ into a one parameter family of orbits parameterized by $s>0$, that is, 
$\mmm\equiv (0,\infty) \times PL(2,\R)$.

Consider the coframe  $\{ds,\sigma _k:k=1,2,3\}$ on  $\mmm$ where $\sigma _k$ are the left-invariant $1$-forms dual to the basis $\{e_k:k=1,2,3\}$ of $\mathfrak{p}:=T_{[\mathbb{I}_2]}PL(2,\mathbb{R})$ given by
\begin{equation}\label{hyp2}
e_1=\begin{pmatrix}
0\;\; &1\\
1\;\;& 0
\end{pmatrix},\quad
e_2=\begin{pmatrix}
0 & 1\\
-1 & 0
\end{pmatrix},\quad
e_3=\begin{pmatrix}
1 & 0\\
0 & -1
\end{pmatrix}.
\end{equation}
Any $PL(2,\mathbb{R})$-invariant metric $\tilde{\gamma }$ on  $\mmm$ is determined by a one-parameter family of symmetric bilinear forms $\tilde{\gamma} _s: V_s \times V_s\rightarrow \mathbb{R}$ where $V_s:=\partial /\partial s \oplus \mathfrak{p}$ is the tangent space to $\mmm$ at the element $w_s$.

In terms of the complex  coordinate system $(z_1,z_2)$ on $\mmm$, where $z_1=x_1+iy_1$ and $z_2=x_2+iy_2$, one  can obtain that
\begin{align}
e_1&=(1+e^s)\frac{\partial }{\partial x_1}+ (1+e^{-s})\frac{\partial }{\partial x_2},\quad &
e_2&=(1-e^s)\frac{\partial }{\partial x_1}+ (1-e^{-s})\frac{\partial }{\partial x_2},\nonumber\\
e_3&=2\; e^{s/2}\frac{\partial }{\partial y_1}+2 \; e^{-s/2}\frac{\partial }{\partial y_2},\quad &
\frac{\partial }{\partial s}&=\frac{1}{2}\; e^{s/2} \frac{\partial }{\partial y_1}-\frac{1}{2}\; e^{-s/2} \frac{\partial }{\partial y_2}.\label{hyp3}
\end{align}
Hence, the almost complex structure $J$ on $\mmm$ acts as
\begin{align}\label{hyp6}
Je_1&=\cosh(s/2) \;e_3,\qquad & Je_2&=-4 \sinh(s/2) \;\frac{\partial }{\partial s},\nonumber \\
Je_3&=-\frac{1}{\cosh(s/2)} \;e_1,\qquad & J\frac{\partial }{\partial s}&=\frac{1}{4 \sinh(s/2)}\; e_2.
\end{align}

In addition to the $PL(2,\mathbb{R})$ isometric action on $\mmm$, there is  a discrete isometry on  $\mmm$ defined as $\text{P}:(z_1,z_2)\rightarrow (z_2,z_1)$. Hence, the  group $G:=PL(2,\mathbb{R})\times \{\text{Id},\text{ P}\}$, where $\text{Id}$ is the identity map, acts isometrically on  $\mmm$.

\begin{proposition}\label{hyppro1}\label{mssyc}
Let $\tilde{\gamma} $ be a $G$-invariant k\"{a}hler metric on $\mmm$. Then, there exists a smooth function $A:(0,\infty )\rightarrow \mathbb{R}$ such that
\begin{equation}\label{hyp8}
\tilde{\gamma} =A_1(s) \;ds^2 +A_2(s)\; \sigma _1^2 + A_3(s) \;\sigma _2^2 +A_4(s) \;\sigma _3^2,
\end{equation}
where $A_1(s),\dots,A_4(s)$ are related to $A(s)$ by
\begin{align}
A_1&=\frac{1}{8 \sinh(s/2)} \frac{d}{ds} \biggl(\frac{A(s)}{\cosh(s/2)}\biggr),\quad &A_2&=A(s),\nonumber\\
A_3&=2 \sinh(s/2)\frac{d}{ds}\biggl(\frac{A(s)}{\cosh(s/2)}\biggr),\quad &A_4&=\frac{A(s)}{\cosh^2(s/2)}.\label{hyp9}
\end{align}
\end{proposition}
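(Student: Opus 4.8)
The plan is to use the cohomogeneity-one description $\mmm\cong(0,\infty)\times PL(2,\R)$ established above: a $PL(2,\R)$-invariant metric $\tilde\gamma$ is determined by the family of bilinear forms $\tilde\gamma_s$ on $V_s=\cd/\cd s\oplus\p$, equivalently by the ten functions $\tilde\gamma_s(X_i,X_j)$ with $(X_0,X_1,X_2,X_3)=(\cd/\cd s,e_1,e_2,e_3)$. The proposition amounts to showing that the six off-diagonal functions vanish and that the four diagonal functions, which by duality are exactly the coefficients $A_1,\dots,A_4$ of $ds^2,\sigma_1^2,\sigma_2^2,\sigma_3^2$, are all determined by $A:=A_2$.

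\emph{Diagonalisation.} First I would exhibit an element of $G$ fixing $w_s$, obtained by composing the swap $\mathrm P$ with an element of $PL(2,\R)$. Let $[M]\in PL(2,\R)$ be the order-two element with $M\odot z=-1/z$. Then $M\odot(ie^{\pm s/2})=ie^{\mp s/2}$, so $L_M$ and $\mathrm P$ agree at $w_s$; since $L_M\circ L_M=\id$ and $L_M$, $\mathrm P$ commute, $Q:=L_M\circ\mathrm P\in G$ is an isometric involution fixing every $w_s$. Because $L_M$ is holomorphic with $(\d/\d z)(-1/z)=1/z^2$, which is real and negative at $z=ie^{\pm s/2}$, a short computation using (\ref{hyp3}) gives $\d Q_{w_s}(\cd/\cd s)=\cd/\cd s$, $\d Q_{w_s}e_2=e_2$, $\d Q_{w_s}e_1=-e_1$, $\d Q_{w_s}e_3=-e_3$. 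Invariance of $\tilde\gamma_s$ under this $\diag(1,-1,1,-1)$ forces $\tilde\gamma_s(\cd/\cd s,e_1)$, $\tilde\gamma_s(\cd/\cd s,e_3)$, $\tilde\gamma_s(e_1,e_2)$ and $\tilde\gamma_s(e_2,e_3)$ to vanish. The two surviving off-diagonal terms are killed by the Hermitian identity $\tilde\gamma(J\cdot,J\cdot)=\tilde\gamma$: by (\ref{hyp6}), $J$ preserves each of $\mathrm{span}(e_1,e_3)$ and $\mathrm{span}(\cd/\cd s,e_2)$ while interchanging the two rays, so $\tilde\gamma_s(e_1,e_3)=\tilde\gamma_s(Je_1,Je_3)=-\tilde\gamma_s(e_1,e_3)$ and likewise $\tilde\gamma_s(\cd/\cd s,e_2)=0$. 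Hence $\tilde\gamma$ has the diagonal form (\ref{hyp8}).

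\emph{The relations.} Applying the Hermitian identity to the diagonal entries now gives two of the relations directly: from $Je_1=\cosh(s/2)e_3$ one gets $A_4=A_2/\cosh^2(s/2)$, and from $J(\cd/\cd s)=e_2/(4\sinh(s/2))$ one gets $A_1=A_3/(16\sinh^2(s/2))$. Only the relation between $A:=A_2$ and $A_3$ remains, and I would extract it from $d\omega=0$, with $\omega=\tilde\gamma(J\cdot,\cdot)$ the Kähler form. Using (\ref{hyp6}) and the diagonal form one finds
\beq
\omega=\frac{A_3}{4\sinh(s/2)}\,ds\wedge\sigma_2+\frac{A}{\cosh(s/2)}\,\sigma_1\wedge\sigma_3.
\eeq
Since the $\sigma_k$ are pulled back from $PL(2,\R)$, their exterior derivatives are given by the Maurer--Cartan equations for $\p=\mathfrak{sl}(2,\R)$; from (\ref{hyp2}) one computes $[e_1,e_2]=-2e_3$, $[e_2,e_3]=-2e_1$, $[e_3,e_1]=2e_2$, hence $d\sigma_1=2\sigma_2\wedge\sigma_3$, $d\sigma_2=2\sigma_1\wedge\sigma_3$, $d\sigma_3=2\sigma_1\wedge\sigma_2$. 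A direct computation then yields $d\omega=\bigl(\tfrac{d}{ds}(A/\cosh(s/2))-A_3/(2\sinh(s/2))\bigr)\,ds\wedge\sigma_1\wedge\sigma_3$, so closure forces $A_3=2\sinh(s/2)\tfrac{d}{ds}(A/\cosh(s/2))$, and then $A_1=\tfrac1{8\sinh(s/2)}\tfrac{d}{ds}(A/\cosh(s/2))$ via the Hermitian relation above. Smoothness of $A=A_2=\tilde\gamma_s(e_1,e_1)$ is inherited from that of $\tilde\gamma$.

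\emph{Main obstacle.} The delicate step is the computation of $\d Q_{w_s}$: one must track which $\H^2$-factor each of $x_1,y_1,x_2,y_2$ labels after applying $\mathrm P$, and evaluate the holomorphic derivative of $z\mapsto -1/z$ at the correct point in each factor, so that the cancellations yielding $\diag(1,-1,1,-1)$ come out. The $d\omega$ computation is then routine, though the Maurer--Cartan sign conventions need care.
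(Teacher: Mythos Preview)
Your argument is correct and follows essentially the same route as the paper. The only cosmetic difference is in how the discrete symmetry is exploited: the paper works with the swap $\mathrm{P}$ directly, identifying it with right multiplication by $Q=\left(\begin{smallmatrix}0&1\\-1&0\end{smallmatrix}\right)$ on the $PL(2,\R)$ factor and computing $\mathrm{P}^*\sigma_i$, whereas you compose $\mathrm{P}$ with the $PL(2,\R)$-element $z\mapsto -1/z$ to obtain an involution fixing $w_s$ and compute its differential in holomorphic coordinates. Both yield the same $\mathrm{diag}(1,-1,1,-1)$ action on $(\cd/\cd s,e_1,e_2,e_3)$ and hence kill the same four off-diagonal coefficients; the remaining steps (Hermitian identity for $A_6=A_9=0$ and the two diagonal relations, then $d\omega=0$ via the Maurer--Cartan equations for the final relation) are identical to the paper's.
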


\begin{proof}
With respect to the coframe $\{ds,\sigma _k:k=1,2,3\}$ on $\mmm$, any $PL(2,\mathbb{R})$-invariant symmetric $(0,2)$ tensor has 
the form
\begin{align}
\tilde{\gamma} &=A_1\; ds^2+ A_2 \;\sigma _1^2 +A_3 \;\sigma_2^2 + A_4\; \sigma _3^2+ 2A_5 \;ds \;\sigma _1 +2 A_6\; ds\; \sigma _2 +2 A_7 \;ds\; \sigma _3\nonumber\\
&\quad + 2 A_8\; \sigma _1 \sigma _2 +2 A_9\; \sigma _1 \sigma _3 + 2 A_{10}\; \sigma _2 \sigma _3,\label{hyp10}
\end{align}
where $A_1,\dots,A_{10}$ are smooth functions of  $s$  only. The metric is also invariant under the isometry $\text{P}$, which 
swaps the two points in $\H^2$. For a given non-coincident pair of points in $\H^2$, this transposition can be accomplished
by acting with an isometry in $PL(2,\R)$. For the point $w_s$, we must act with
\beq
Q=\left(\begin{array}{cc} 0& 1\\ -1 & 0 \end{array}\right),
\eeq
and hence, to swap the pair of points $g\cdot w_s$, where $g\in PL(2,\R)$, we must act with $gQg^{-1}$. Hence, in terms of the
coordinates $s\in (0,\infty)$, $g\in PL(2,\R)$ the isometry $\text{P}$ is
\beq
\text{P}:(s,g)\mapsto (s,gQg^{-1}g)=(s,gQ),
\eeq
that is, right multiplication on $PL(2,\R)$ by $Q$. The induced action of $\text{P}$ on $\p$, identified with
the space of left-invariant vector fields on $PL(2,\R)$, is $X\mapsto \d\text{P}(X)=Q^{-1}XQ$, so $e_1\mapsto -e_1$,
$e_2\mapsto e_2$ and $e_3\mapsto -e_3$. Hence
\beq
\text{P}^*\sigma_1=-\sigma_1, \qquad
\text{P}^*\sigma_2=\sigma_2, \qquad
\text{P}^*\sigma_3=-\sigma_3, \qquad\mbox{and}\qquad
\text{P}^*ds=ds.
\eeq
Now $\text{P}^*\wt\gamma=\wt\gamma$, so we conclude that $A_5=A_7=A_8=A_{10}=0$.

Now, since $\tilde{\gamma }$ is Hermitian, $\tilde{\gamma} _s(u,v)=\tilde{\gamma} _s(Ju,Jv)$ for all $u,v\in V_s$, whence,
using (\ref{hyp6}),
\begin{align}
A_3&\equiv  16 \sinh^2(s/2) A_1,\quad & A_4&\equiv  \frac{A_2}{\cosh^2 (s/2)},\quad & A_9 &\equiv A_6\equiv 0.\label{hyp11}
\end{align}
The k\"ahler form $\omega(\cdot,\cdot)=\tilde{\gamma}(J\cdot,\cdot)$  on $\mmm$ is, therefore,
\beq
\omega=4 \sinh(s/2) A_1 \;ds \wedge  \sigma _2 + \frac{A_2}{\cosh(s/2)}\; \sigma _1\wedge \sigma _3.\label{hyp12}
\eeq
Hence,
\beq
d\omega = -\biggl[ 8 \sinh(s/2) A_1- \frac{d}{ds} \biggl(\frac{A_2}{\cosh(s/2)}\biggr)\biggr]\; ds \wedge \sigma _2 \wedge \sigma _3.\label{hyp20}
\eeq
where we have used the fact that, for our chosen frame/coframe for $PL(2,\R)$,
\beq
\d\sigma_i(e_j,e_k)=e_j[\sigma_i(e_k)]-e_k[\sigma_i(e_j)]-\sigma_i([e_j,e_k])=0-0-\sigma_i([e_j,e_k]|_{\p}).
\eeq
whence
\begin{equation}\label{hyp18}
d\sigma _1= 2 \;\sigma _2 \wedge \sigma _3,\qquad d\sigma _2=2 \;\sigma _1 \wedge  \sigma _3,\qquad d\sigma _3= 2\; \sigma _1\wedge  \sigma _2.
\end{equation}
Since $\tilde{\gamma} $ is k\"{a}hler, $d\omega =0$, so
\beq
\frac{d}{ds} \biggl(\frac{A_2}{\cosh(s/2)}\biggr)-8 \sinh(s/2) A_1 =0.\label{hyp23}
\eeq
So $A_1$, $A_3$, $A_4$ are uniquely determined by the single function $A(\lambda)=A_2(\lambda)$, by (\ref{hyp23}),
(\ref{hyp11}) as claimed.
\end{proof}


\begin{remark} We can equally well think of (\ref{hyp8}) as a formula for a general $PL(2,\R)$ invariant k\"ahler metric
on $M_2$. This space decomposes into a one-parameter family of orbits, parametrized by $s\in[0,\infty)$. Generic orbits
are diffeomorphic to $PL(2,\R)/K$ where $K$ denotes the subgroup $\{\I_2,Q\}$, and there is an exceptional orbit at $s=0$
diffeomorphic to $\H^2$ (the submanifold of coincident vortices). In this picture, one should interpret $\sigma_1^2,\sigma_2^2,
\sigma_3^2$ as $Ad(K)$-invariant symmetric bilinear forms on $\p=T_{[\I_2]}(PL(2,\R)/K)$ (note that $\sigma_1$ and $\sigma_3$ 
are not $Ad(K)$-invariant, so are not well-defined one-forms on $PL(2,\R)/K$). 
\end{remark}

Now, consider the holomorphic isometry of $(\mmm,\wt\gamma)$ defined by
\begin{equation}\label{isoisoP}
 \tilde{K}:(z_1,z_2)\rightarrow  (-\frac{1}{z_2},-\frac{1}{z_1})
\end{equation}
The fixed point set in $\mmm$ of this isometry  is
\begin{equation}
\mmm^0=\{ (\xi ,-\frac{1}{\xi }): i\neq \xi \in \mathbb{H}^2\}\subset \mmm.
\end{equation}
Clearly, $\mmm^0$  is a non-compact $1$-dimensional complex submanifold of $\mmm$. This is the (double cover of the) 
$2$-vortex relative moduli space.  The induced metric on $\mmm^0$ is
\begin{equation}\label{hyp25}
\wt\gamma^0=A_1(s)\;ds^2+A_3(s)\;\sigma _2^2 =A_1(s)\; (ds^2+16 \sinh^2(s/2) \sigma _2^2).
\end{equation}

\begin{corollary}\label{hypmetcor}
The $L^2$ metric on the moduli space $M_2$ is
\begin{equation}
\tilde{\gamma}_{L^2} =A_1(s) \;ds^2 +A_2(s)\; \sigma _1^2 + A_3(s) \;\sigma _2^2 +A_4(s) \;\sigma _3^2,
\end{equation}
where $A_1,\dots,A_4$ are functions of $s$ only determined as in (\ref{hyp9}) by 
\begin{equation}\label{hyp35}
A_{L^2}(s)= 8 \pi  \Bigl( 1+\cosh^2(s/2) + 2 \sqrt{\cosh^2(s/2)+\sinh^4(s/2)} \;\Bigr).
\end{equation}
\end{corollary}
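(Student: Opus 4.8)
The plan is to leverage the structural result of Proposition~\ref{hyppro1}, which already reduces the problem to identifying a single function of one variable, and then to pin that function down using Strachan's explicit submanifold computations~\cite{str}. First I would verify that $\wt\gamma_{L^2}$ falls under Proposition~\ref{hyppro1}: the $L^2$ metric on $M_2$ is built from the field-theoretic data, $PL(2,\R)$ acts on $\H^2$ by isometries and hence on the space of static solutions preserving the $L^2$ inner product on zero modes, the label swap $\mathrm{P}$ is likewise an $L^2$ isometry, and $\gamma_{L^2}$ is k\"ahler. Pulling back to $\mmm$ we conclude that $\wt\gamma_{L^2}$ is a $G$-invariant k\"ahler metric, so by Proposition~\ref{hyppro1} it has the form~(\ref{hyp8})--(\ref{hyp9}) for some smooth $A_{L^2}:(0,\infty)\ra\R$, and the entire task is to determine $A_{L^2}$.

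Next I would restrict $\wt\gamma_{L^2}$ to the fixed-point submanifold $\mmm^0$ of the holomorphic isometry~(\ref{isoisoP}), that is, the double cover of the centred two-vortex moduli space. By~(\ref{hyp25}) the induced metric there is the surface of revolution $A_1(s)(ds^2+16\sinh^2(s/2)\sigma_2^2)$, whose only free datum is $A_1(s)$; comparing with Strachan's explicit formula for this restriction fixes $A_1(s)$, and the k\"ahler relation~(\ref{hyp23}) then integrates to give $A_{L^2}(s)/\cosh(s/2)$ up to a single additive constant. To fix that constant I would use Strachan's second submanifold computation---the transverse surface, whose induced metric involves $A_2=A_{L^2}$ (equivalently $A_4=A_{L^2}/\cosh^2(s/2)$) directly rather than through a derivative---or, equivalently, the requirement that $\gamma_{L^2}$ extend smoothly over the coincidence set $\Delta_2$ together with the known small-separation behaviour of the relevant Samols-type coefficient implicit in~(\ref{hy12}). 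Either route yields~(\ref{hyp35}), and as a closing consistency check I would verify that this closed form reproduces \emph{both} of Strachan's submanifold metrics and has the correct $s\ra0$ expansion against~(\ref{hy12}).

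The main obstacle I anticipate is bookkeeping rather than conceptual: translating Strachan's parametrizations and normalization conventions (the distance scaling implicit in the metric~(\ref{sapyou}), the overall factor of $\pi$ in~(\ref{hy12}), and his coordinates on the two submanifolds) into the $(s,g)\in(0,\infty)\times PL(2,\R)$ description used here, and then correctly extracting $A_1(s)$ and the integration constant. A secondary subtlety is confirming that the unexpected square-root term in~(\ref{hyp35}) genuinely emerges from integrating $8\sinh(s/2)A_1(s)$, and that the resulting metric is smooth---not merely continuous---across $s=0$, since that smoothness underlies the earlier assertion that RMG flow on $M_2$ is globally well defined.
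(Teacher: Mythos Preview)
Your proposal is correct and follows essentially the same route as the paper: invoke Proposition~\ref{hyppro1} to reduce to a single function $A(s)$, read off $A_1(s)$ from Strachan's metric on the centred submanifold $\mmm^0$, integrate the k\"ahler relation~(\ref{hyp23}) to obtain $A(s)$ up to an additive constant $c\cosh(s/2)$, and fix $c$ using Strachan's second explicit result. The only clarification is that the paper's ``second submanifold'' is specifically the coincident-vortex set $M_2^{co}$ (the exceptional orbit $s=0$), where $\gamma_{co}=4\pi g$ for $n=2$, and the constant is pinned down by computing $A_2(0)=\gamma_{L^2}(e_1,e_1)|_{(i,i)}=32\pi$, forcing $c=0$.
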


\begin{proof}
The lifted $L^2$ metric on $\mmm$ is a $G$-invariant k\"{a}hler metric, and so is covered by Proposition \ref{hyppro1}. Hence, we only need to determine the function $A(s)$ of the $L^2$ metric. 

An explicit formula for the induced $L^2$ metric on the relative moduli space $\mmm^0$ has been determined by Strachan in \cite{str}, and
rederived and generalized in \cite{kruspe-vortex}, which uses the same conventions for the abelian Higgs model that we are using. Comparing  the formula 
in \cite{kruspe-vortex} with (\ref{hyp25}), we deduce that
\begin{align}
A_1(s)&= 2\pi  \frac{\tanh^2(s/4)}{(1+\tanh^2(s/4))^2}\Biggl[ 1+ \frac{4 (1+\tanh^4(s/4))}{\sqrt{1+\tanh^8(s/4) +14 \tanh^4(s/4)}}\Biggr],\label{notcrap}\\
&=\frac{\pi }{2} \tanh^2(s/2) \Biggl[ 1+\frac{2(\cosh^2(s/2)+1)/\sinh^2(s/2)}{\sqrt{[\cosh(s/2)/\sinh^2(s/2)]^2+1}}\label{crap}
\Biggr].
\end{align}
where we have used the hyperbolic double-angle formulae. Equation (\ref{hyp23}) now gives a differential equation for $A(\lambda)=
A_2$, whose general solution is
\begin{equation}\label{hyp32}
A(s)= 8 \pi  \Bigl( (\cosh^2(s/2) +1) + 2 \sinh^2(s/2) \sqrt{[\cosh(s/2)/\sinh^2(s/2)]^2+1}\; \Bigr)+c \cosh(s/2),
\end{equation}
where $c$ is an integration constant. 

Strachan also gave an explicit formula for the induced $L^2$ metric on $M_n^{co}$, the 
two-dimensional submanifold of $M_n$ consisting of entirely coincident vortices. By symmetry, this metric must be homothetic
to the (physical) metric $g$ on $\H^2$ (\ref{sapyou}). In fact \cite{str}
\beq\label{sysmc}
\gamma_{co}=\frac12\pi n(n+2) g
\eeq
in our conventions.
Consider $X$, the killing vector field on $M_2$ generated by $e_1\in\p$. Its squared length, with respect to $\gamma_{L^2}$,
at the point $(i,i)$ (that is, the coincident two-vortex positioned at $i\in\H^2$) is, by definition, $A_2(0)$. $X((i,i))$ 
is clearly tangent to $M_2^{co}$ and moves the coincident two-vortex through
$x+iy=i$ with initial velocity vector $2\cd/\cd y$, and hence with
squared speed $8$ (with respect to the metric $g$).
Hence, by (\ref{sysmc}), 
\beq
\gamma_{co}(X,X)=32\pi.
\eeq
Comparing with (\ref{hyp32}), we
deduce that $c=0$, which completes the proof.
\end{proof}


\begin{proposition}\label{hyppro2}
Let $\tilde{\gamma} $ be a $G$-invariant K\"{a}hler metric on $\mmm$, determined as in Proposition  \ref{hyppro1} by
a function $A(s)$. Then, its Ricci curvature  tensor is
\begin{equation}\label{hyp37}
\text{Ric} =C_1(s) \;ds^2 +C_2(s) \;\sigma _1^2 + C_3(s) \;\sigma _2^2 +C_4(s)\; \sigma _3^2,
\end{equation}
where $C_1,\dots,C_4$ are smooth functions of  $s$ only, defined as in (\ref{hyp9}), by a single function $C(s)$ given by
\begin{equation}\label{hyp38}
C(s)=-4 \sinh(s/2) \cosh(s/2) \frac{d}{ds} \log \biggr( \frac{A_1 A_2}{ \cosh^2(s/2)}\biggl)-8 \cosh^2(s/2).
\end{equation}
\end{proposition}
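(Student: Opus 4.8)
The plan is to prove the proposition in two stages: first the structural claim that $\text{Ric}$ is built from a single function $C(s)$ through (\ref{hyp9}), and then the explicit formula (\ref{hyp38}).

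For the structural claim I would simply rerun the proof of Proposition \ref{hyppro1} with $\tilde\gamma$ replaced by $\text{Ric}$. Since $G$ acts by isometries it preserves the Levi-Civita connexion, hence the Ricci tensor, so with respect to the coframe $\{ds,\sigma_k\}$ the tensor $\text{Ric}$ has the general $PL(2,\R)$-invariant form (\ref{hyp10}) with coefficients $C_1,\dots,C_{10}$ depending on $s$ alone, and invariance under $\text{P}$ kills $C_5=C_7=C_8=C_{10}=0$ exactly as before. The Ricci tensor of any k\"ahler metric is $J$-invariant, $\text{Ric}(Ju,Jv)=\text{Ric}(u,v)$, so (\ref{hyp6}) forces the analogue of (\ref{hyp11}): $C_6=C_9=0$, $C_3=16\sinh^2(s/2)C_1$, $C_4=C_2/\cosh^2(s/2)$. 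Finally the Ricci form $\rho=\text{Ric}(J\cdot,\cdot)$ of a k\"ahler manifold is closed, so the computation that produced (\ref{hyp23}) now gives the first-order relation tying $C_1$ to $C_2$; hence $\text{Ric}$ has the form (\ref{hyp37}) with $C_1,C_3,C_4$ determined by $C:=C_2$ through the same formulae (\ref{hyp9}). Equivalently, $\rho$ being a $G$-invariant closed $(1,1)$-form, it has the same shape as the k\"ahler form (\ref{hyp12}), namely $\rho=4\sinh(s/2)C_1\,ds\wedge\sigma_2+\frac{C}{\cosh(s/2)}\,\sigma_1\wedge\sigma_3$, and it is this identity that I will use to read off $C$.

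To compute $C$ I would use the k\"ahler identity $\rho=-i\partial\bar\partial\log\det(g_{\alpha\bar\beta})$ in the global complex coordinates $(z_1,z_2)$ on $\mmm$, with $z_k=x_k+iy_k$. The determinant is obtained by comparing volume forms: from (\ref{hyp12}), $\tfrac12\,\omega\wedge\omega$ is a constant multiple of $\frac{\sinh(s/2)A_1A_2}{\cosh(s/2)}\,ds\wedge\sigma_1\wedge\sigma_2\wedge\sigma_3$, while the product hyperbolic volume $\frac{4}{y_1^2y_2^2}\,dx_1\wedge dy_1\wedge dx_2\wedge dy_2$ is itself $G$-invariant, and applying Proposition \ref{hyppro1} to the product metric $g\oplus g$ on $\mmm$ — which corresponds to $\widehat A(s)=(g\oplus g)(e_1,e_1)|_{w_s}=16\cosh^2(s/2)$, whence $\widehat A_1\equiv1$ — identifies it as $32\sinh s\,ds\wedge\sigma_1\wedge\sigma_2\wedge\sigma_3$. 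Dividing one by the other gives $\det(g_{\alpha\bar\beta})=(\text{const})\cdot\frac{A_1A_2}{\cosh^2(s/2)\,y_1^2y_2^2}$, so that
\[
\rho=-i\partial\bar\partial\log\frac{A_1A_2}{\cosh^2(s/2)}\;-\;\tfrac12\bigl(\omega_{\mathbb{H}}^{(1)}+\omega_{\mathbb{H}}^{(2)}\bigr),
\]
where I have used that $-i\partial\bar\partial(-2\log y_k)=-\tfrac12\omega_{\mathbb{H}}^{(k)}$ is the hyperbolic Ricci form of the $k$-th factor (the constant drops under $\partial\bar\partial\log$).

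It then remains to write both pieces in the coframe and extract the $\sigma_1\wedge\sigma_3$ coefficient. The sum $\omega_{\mathbb{H}}^{(1)}+\omega_{\mathbb{H}}^{(2)}$ is the k\"ahler form of $g\oplus g$, so by (\ref{hyp12}) with $\widehat A$ it equals $4\sinh(s/2)\,ds\wedge\sigma_2+16\cosh(s/2)\,\sigma_1\wedge\sigma_3$; while for any $F=F(s)$, using $-i\partial\bar\partial F=\tfrac12\,d(J^*dF)$, the identity $J^*ds=-4\sinh(s/2)\,\sigma_2$ read straight off (\ref{hyp6}), and (\ref{hyp18}), one gets $-i\partial\bar\partial F=-2\bigl(F'\sinh(s/2)\bigr)'\,ds\wedge\sigma_2-4F'\sinh(s/2)\,\sigma_1\wedge\sigma_3$. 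Taking $F=\log\frac{A_1A_2}{\cosh^2(s/2)}$, adding the two contributions, comparing with $\rho=4\sinh(s/2)C_1\,ds\wedge\sigma_2+\frac{C}{\cosh(s/2)}\sigma_1\wedge\sigma_3$ and solving for $C$ yields (\ref{hyp38}); the $ds\wedge\sigma_2$ coefficients should then agree automatically, which is a useful internal check of (\ref{hyp9}). The one genuine hazard is bookkeeping of normalization constants — the factor $\tfrac12$ in $-i\partial\bar\partial F=\tfrac12 d(J^*dF)$, the Jacobian relating $ds\wedge\sigma_1\wedge\sigma_2\wedge\sigma_3$ to $dx_1\wedge dy_1\wedge dx_2\wedge dy_2$, and the overall constant in $\det(g_{\alpha\bar\beta})$ — so I would sanity-check the final formula by restricting the resulting Ricci tensor to the surfaces $\mmm^0$ or $M_2^{co}$, whose curvatures are independently known.
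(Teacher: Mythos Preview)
Your proof is correct. The structural first half is exactly the paper's argument: $\text{Ric}$ is a $G$-invariant Hermitian symmetric $(0,2)$-tensor with closed associated form, hence is covered by Proposition~\ref{hyppro1}, so the coefficients $C_1,\dots,C_4$ are determined by $C(s)=C_2(s)$ via (\ref{hyp9}).

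For the explicit formula (\ref{hyp38}) you take a genuinely different route. The paper simply computes $C(s)=\text{Ric}_s(e_1,e_1)=\sum_i\tilde\gamma_s(R(E_i,e_1)e_1,E_i)$ directly in the orthonormal frame $E_i$ built from $\partial_s,e_1,e_2,e_3$, i.e.\ a head-on curvature calculation whose details are suppressed. You instead exploit the K\"ahler identity $\rho=-i\partial\bar\partial\log\det(g_{\alpha\bar\beta})$, obtain the determinant by comparing $\tfrac12\omega\wedge\omega$ with the volume form of the product hyperbolic metric $g\oplus g$ (itself identified via Proposition~\ref{hyppro1} with $\widehat A=16\cosh^2(s/2)$, $\widehat A_1=1$), and then read off $C$ from the $\sigma_1\wedge\sigma_3$ component. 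The trick of using $g\oplus g$ as a $G$-invariant reference metric is neat: it converts the Jacobian between $ds\wedge\sigma_1\wedge\sigma_2\wedge\sigma_3$ and $dx_1\wedge dy_1\wedge dx_2\wedge dy_2$ into an application of Proposition~\ref{hyppro1} rather than a separate computation, and it makes the $-8\cosh^2(s/2)$ term appear transparently as the contribution of the hyperbolic Ricci forms. Your approach avoids Christoffel symbols and Riemann tensor components entirely, at the cost of a little care with conventions for $d^c$ and volume normalisations; the paper's approach is more pedestrian but hides the bulk of the work in the unstated curvature computation (\ref{hyp40}).
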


\begin{proof}
The Ricci curvature  tensor with respect to $\tilde{\gamma}$ is a $G$-invariant symmetric $(0,2)$ tensor on $\mmm$ which is hermitian and whose associated $2$-form $\rho (\cdot,\cdot)=\text{Ric}(J\cdot,\cdot)$, the Ricci form, is closed. Thus, it is covered by 
Proposition  \ref{hyppro1}, that is,  $\text{Ric}$  has the same structure as $\tilde{\gamma} $ and its coefficients $C_1,\dots,C_4$  are related, as in (\ref{hyp9}), to the function $C(s):=C_2(s)=\text{Ric}_s(e_1,e_1)$. Introducing an  orthonormal basis $\{E_k:k=0,1,2,3\}$ of  $(V_s,\gamma _s)$ as
\begin{equation}\label{hyp39}
E_0=\frac{1}{\sqrt{A_1}} \frac{\partial }{\partial s},\quad E_1=\frac{1}{\sqrt{A_2}} \;e_1,\quad E_2=\frac{1}{\sqrt{A_3}} \;e_2,\quad E_3=\frac{1}{\sqrt{A_4}}\; e_3,
\end{equation}
then, by  the definition of the Ricci curvature tensor, we obtain that 
\begin{align}
\text{Ric}_s(e_1,e_1)&=\sum_{i=0}^{3} \tilde{\gamma}_s (R(E_i,e_1) e_1, E_i),\nonumber\\
&=-4 \sinh(s/2) \cosh(s/2) \frac{d}{ds} \log \biggr( \frac{A_1 A_2}{ \cosh^2(s/2)}\biggl)-8 \cosh^2(s/2),\label{hyp40}
\end{align}
where $R$ is the Riemannian curvature tensor associated with $\tilde{\gamma} $. Hence, the claim is proved.
\end{proof}

The Ricci form $\rho $ on $M_2$  has the same structure as the k\"{a}hler form $\omega$,  that is,
\beq\label{hyp41}
\rho =4 \sinh(s/2) C_1 \; ds \wedge \sigma _2 + \frac{C_2}{\cosh(s/2)}\; \sigma _1\wedge \sigma _3.
\eeq
Since $M_2$ has trivial second cohomology, this (closed) form must be exact.
Rewriting $C_1$ in terms of $C$ one sees that
\beq
\rho=\d\left(\frac{C(s)}{2\cosh(s/2)}\:\sigma_2\right).
\eeq


\subsection{RMG motion on the reduced moduli space}

The $2$-vortex relative moduli space  $M_2^0$ is the fixed point set in  $\mmm$ of the holomorphic isometry $\tilde{K}$, defined in (\ref{isoisoP}). Hence, by Corollary \ref{totrmg3},   RMG curves with initial data in $TM_2^0$ remain  on $M_2^0$ for all time. So,  RMG flow localizes to $M_2^0$. However, the restriction of  the Ricci form $\rho $  on  $M_2$ to $M_2^0$ does not coincide with $\rho^0 $, the Ricci form on $M_2^0$ defined by its induced metric $\gamma^0$. Hence, the RMG flow on $M_2^0$, thought of as a submanifold of $M_2$,  does not coincide with the  RMG flow on $M_2^0$, thought of as a k\"{a}hler manifold in its own right. Here, we will compare these flows on $M_2^0$, which we call the extrinsic and intrinsic RMG flow, respectively.

It follows from Corollary \ref{hypmetcor} and Proposition \ref{hyppro2} that the restricted and
intrinsic Ricci forms on $M_2^0$ are
\beq
\rho\rvert=F\vert(s)ds\wedge\sigma_2,\qquad
\rho^0=F^0(s)ds\wedge\sigma_2
\eeq
where
\bea
F\rvert(s)&=& F^0(s)-\frac{d}{ds} \Bigl[2 \sinh(s/2) \frac{d}{ds} \log \Bigl(\frac{A(s)}{\cosh^2(s/2)}\Bigr)\Bigr]-\sinh(s/2)\\
F^0(s)&=& -\frac{d}{ds} \Bigl[ 2 \sinh(s/2) \frac{d}{ds} \log \left(\frac{d}{ds}\Bigl(\frac{A(s)}{\cosh(s/2)}\Bigr)\right)\Bigr]-\frac{1}{2} \sinh(s/2)
\eea
In the case of the $L^2$ metric, these functions behave asymptotically like
\begin{align}
F\rvert&\sim -\frac{1}{5}s^3,\qquad {F^0 }\sim  \frac{7}{40}s^3,\quad\;\;\;\text{as}\; s\rightarrow 0,\label{indric22}\\
F\rvert&\sim -e^{s/2},\qquad {F^0 }\sim  -\frac{1}{2}e^{s/2},\quad\text{as}\; s\rightarrow \infty .\label{indric2}
\end{align}
From (\ref{indric2}), one can see that  even as $s\rightarrow \infty $,  the restricted and intrinsic Ricci forms do not coincide. Comparing $F\rvert$ with ${F^0 }$ in (\ref{indric2}), one expects that the extrinsic and intrinsic RMG flows  coincide  for large $s$ if the RMG parameters  in each are  related by
\begin{equation}\label{exin}
\lambda _{\text{extrinsic}}=\frac{1}{2} \lambda _{\text{intrinsic}}.
\end{equation}
Henceforth, when comparing the two flows we choose their parameters to be related in this fashion.
In this case, one expects the RMG flows in the core region of $M_2^0$ (i.e.\ for $s$ small) to be
qualitatively quite different, since $F|$ is uniformly negative, while $F^0$ is positive for $s$ 
small and negative for $s$ large. Figure \ref{FFcomp} compares $F\vert$ and $2F^0$.

\begin{figure}[H]
\centering
\includegraphics[trim=1.5cm 3.5cm 1cm 2cm, clip,width=0.4\textwidth,height=0.3\textheight ]{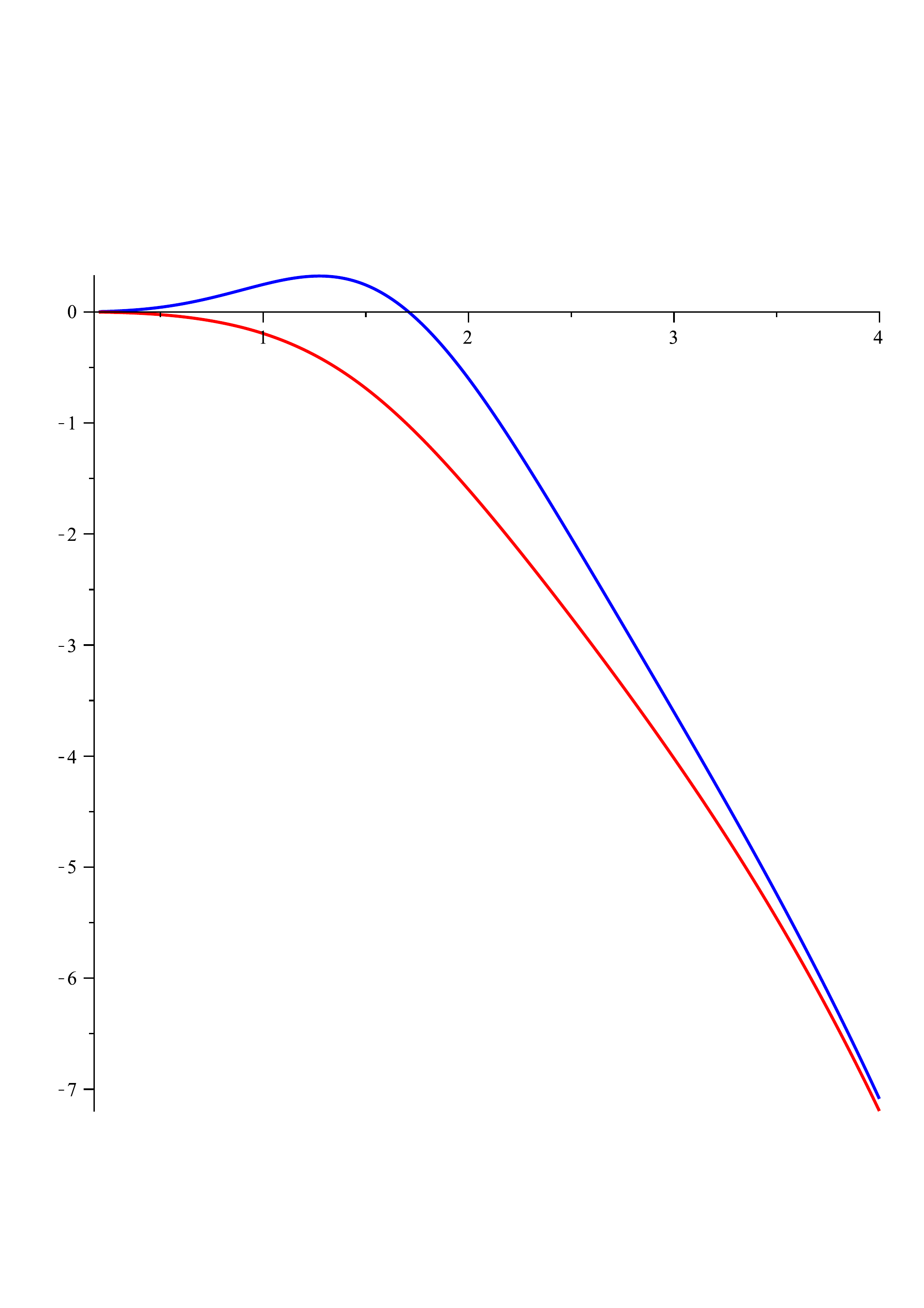}
\caption{\small \sf Comparison of the restricted and intrinsic Ricci forms on the space
of centred hyperbolic two vortices: plots of $F\vert(s)$ (red) and $2F^0(s)$ (blue).}
\label{FFcomp}
\end{figure}

Magnetic geodesic flow on $M_2^0$ with respect to a rotationally invariant effective magnetic field
${\cal B}=F(s)ds\wedge\sigma_2$ is governed by the ODE system
\begin{align}
\ddot{s}&=-\frac{1}{ 2 A_1(s)} \bigl [ A_1'(s) \dot{s}^2 -A_3'(s) \dot{\psi  }^2 + 2F(s)\dot{\psi  }\bigr],\nonumber\\
\ddot{\psi }&= -\frac{1}{A_3(s)} \bigl [A_3'(s) \dot{s}\;\dot{\psi  } -F(s) \dot{s}\bigr],\label{RMGequations}
\end{align}
where $\psi$ is an angular coordinate chosen so that $d\psi  =\sigma _2\rvert$. Choosing
$F(s)$ to be $F\rvert(s)$ or $2F^0(s)$ we obtain the extrinsic and intrinsic RMG flows,
respectively, normalized so as to coincide asymptotically at large $s$.  We have solved
these ODE systems numerically 
for  various initial  values. The corresponding RMG trajectories of one of the vortices  on the Poincar\'{e } disk are depicted in Figure \ref{RMGflowhypM22}.  As expected, RMG trajectories which
reach the core region of $M_2^0$ exhibit marked differences in the two flows. 

\begin{figure}[H]
\centering
\begin{tabular}{cccc}
\includegraphics[trim=4cm 4cm .5cm 5cm, clip, height=0.2\textheight]{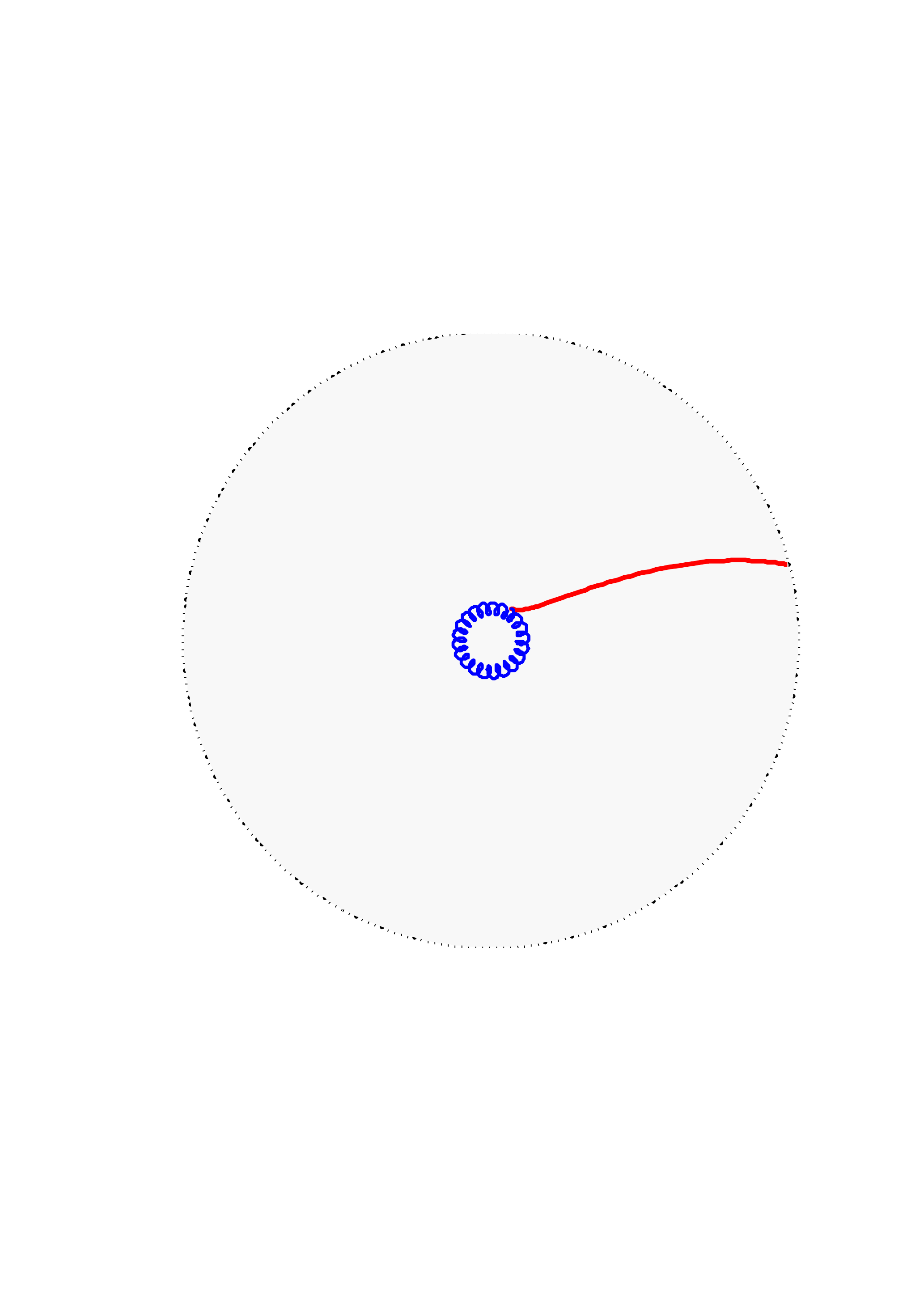}  &
\includegraphics[trim=4cm 4cm .5cm 5cm, clip, height=0.2\textheight]{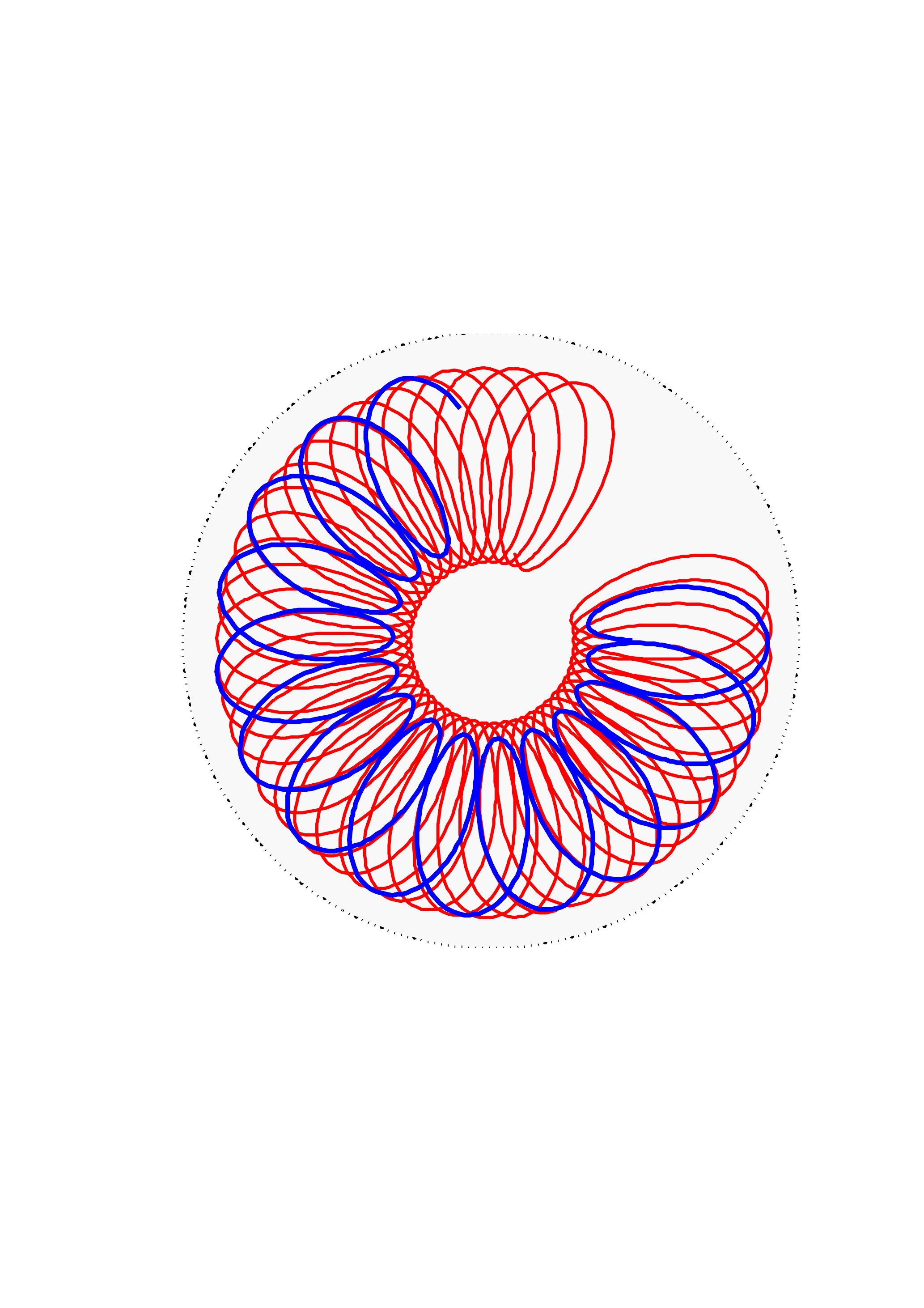}  &
\includegraphics[trim=4cm 4cm .5cm 5cm, clip, height=0.2\textheight]{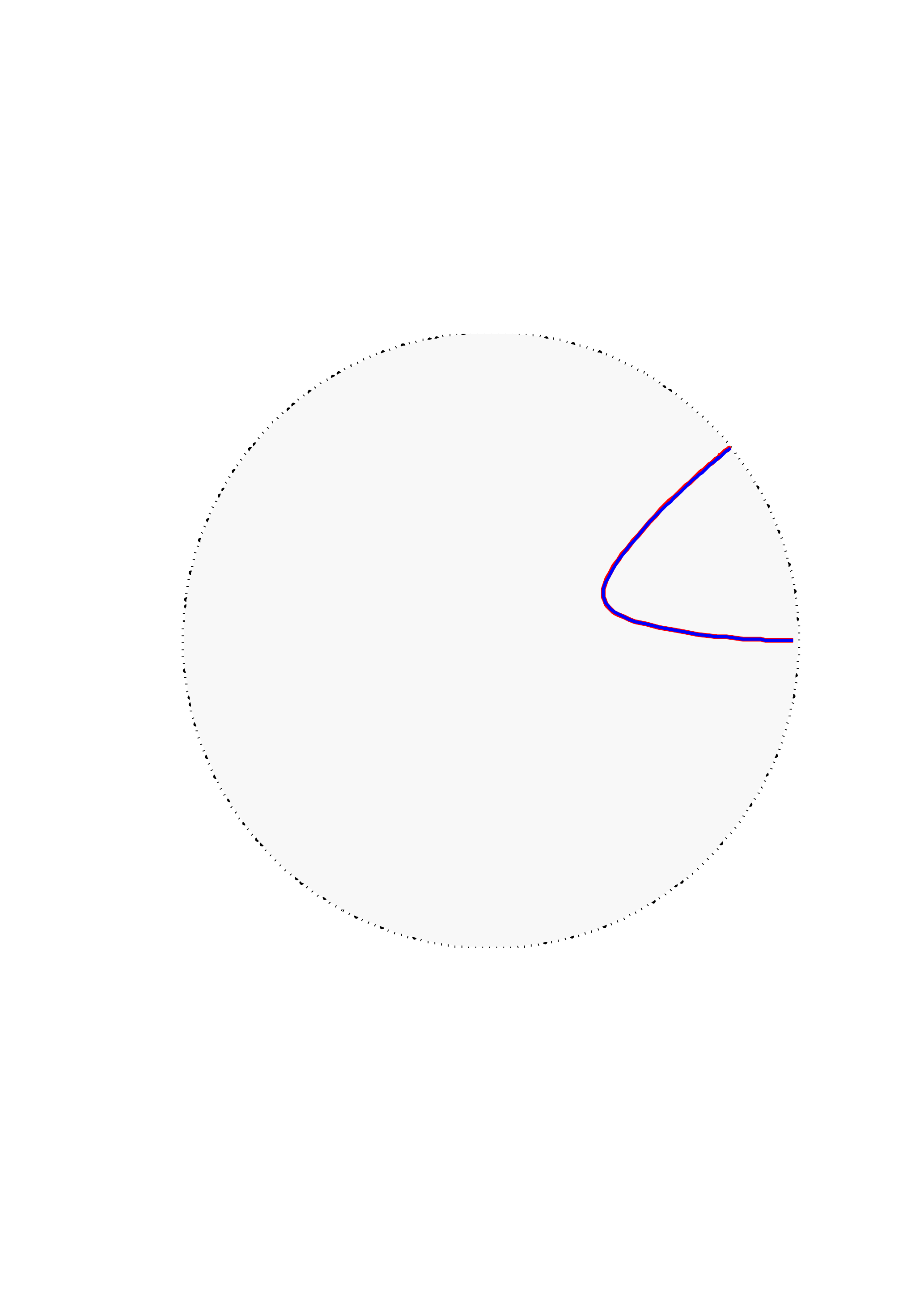}  \\
({\sf a}) & ({\sf b}) & ({\sf c})\\
\end{tabular}
\caption{\small \sf Plots of  vortex  trajectories under the extrinsic RMG flow on $M_2^0$ (red) and the intrinsic RMG flow  (blue) with  $\lambda_{\text{intrinsic}}= 2 \lambda _{\text{extrinsic}}$ and various initial values .}
\label{RMGflowhypM22}\label{lnagf}
\end{figure}

A related observation concerns orbiting vortex pairs. It is immediate from (\ref{RMGequations}) that
a circle of constant $s$ is a closed magnetic geodesic if and only if it is traversed at constant
angular velocity
\beq
\dot\psi=\nu(s)=2\frac{F(s)}{A_3'(s)}.
\eeq
This, then, gives the frequency-separation relation for a pair of vortices orbiting one another
at fixed separation. Note that $A_3'(s)>0$ for all $s>0$.
From figure \ref{FFcomp}, one sees that (for $\lambda>0$) vortex pairs obeying
the intrinsic RMG flow orbit
one another anticlockwise for $s<s_0\approx 1.7$ 
and clockwise for $s>s_0$, whereas orbiting vortex pairs always circulate clockwise in the
extrinsic flow.

Unlike geodesics, the features of  RMG flow on a fixed point set of  a holomorphic isometry,   such as $M_2^0$,  cannot be deduced by knowing only the metric on the fixed point set. This  difference makes  RMG flow  significantly harder to study than geodesic flow and means that studies of intrinsic
RMG flow on low-dimensional submanifolds, such as those presented in \cite{kruspe-vortex}, are of limited
value in trying to understand the true (extrinsic) RMG flow.


\section{RMG motion of $\mathbb{C}P^1$ lumps}
\label{sec:lump}
\news

As observed in section \ref{sec:intro}, the question of {\em completeness} of RMG flow on a noncompact k\"ahler manifold is
interesting and nontrivial. Certainly, if the manifold is complete (as a metric space or, equivalently, its geodesic flow is
complete), then it is RMG complete since RMG flow conserves speed (so an RMG curve which escaped every compact set in bounded
time would define a divergent Cauchy sequence). Since RMG flow converges (pointwise) to geodesic flow in the limit of large speed,
it has been conjectured that the converse holds also: if a k\"ahler manifold is RMG complete then it is geodesically complete
\cite{kruspe-vortex}. In fact this is false, and in this section we provide a counterexample of independent interest: the moduli 
space of unit charge $\CP^1$ lumps on $S^2$, or, equivalently, the space $\rat_1$ of degree one rational maps $S^2\ra S^2$, 
given its $L^2$ metric.

Recall that $\rat_n$ is the space of degree $n$ holomorphic maps $S^2\ra S^2$. If one chooses stereographic coordinates
$z,W$ on the domain and codomain, such maps take the form
\beq
W(z)=\frac{a_0+a_1z+\cdots+a_nz^n}{b_0+b_1z+\cdots+b_n z^n}
\eeq
where $a_0,\ldots,a_n,b_0,\ldots, b_n\in\C$ are constant. There is a natural inclusion $\rat_n\hra\CP^{2n+1}$ defined by
$W(z)\mapsto[a_0,\ldots,a_n,b_0,\ldots,b_n]$, which equips $\rat_n$ with a complex structure, and a natural metric on
$\rat_n$ defined by restricting the $L^2$ norm on $W^{-1}TS^2$ to $T\rat_n$.
It is known \cite{spe-l2} that $\rat_n$ is k\"ahler with respect to this metric 
and complex structure. Further, $\rat_1$ is diffeomorphic to $SO(3)\times\R^3$. One may regard ${\cal O}\in SO(3)$
as parametrizing the internal orientation of the lump and $\lamvec\in\R^3$ as parametrizing both its sharpness, 
$\lambda=|\lamvec|$,
and its position in physical space $-\lamvec/|\lamvec|\in S^2$. Lumps with $\lamvec=\zerovec$ have sharpness $0$, that is,
constant energy density, so do not have a well-defined position. Explicitly, 
the point $(\I_3,(0,0,\lambda))$ corresponds to the rational map
\beq
W(z)=\mu(\lambda)z,\qquad\mbox{where}\qquad \mu=\frac{\Lambda+\lambda}{\Lambda-\lambda}\qquad\mbox{and}\qquad
\Lambda=\sqrt{1+\lambda^2},
\eeq
and every other point in $\rat_1$ can be reached from a point such as this by acting with some isometry:
$G=SO(3)\times SO(3)$ acts isometrically on $\rat_1$ via
\beq
({\cal L},{\cal R}):({\cal O},\lamvec)\mapsto ({\cal LOR}^{-1},{\cal R}\lamvec).
\eeq
This is just the restiction to $\rat_1$ of the natural action of $G$ on all smooth maps $S^2\ra S^2$, namely,
$({\cal L},{\cal R}):\phi\mapsto {\cal L}\circ\phi\circ{\cal R}^{-1}$.

$G$-invariance and the k\"ahler property almost completely determine $\gamma_{L^2}$. By an argument similar to that used
to prove Proposition \ref{mssyc}, one finds \cite{spe-l2} that
\begin{equation}\label{metric}
\gamma_{L^2}   =A_1 d\boldsymbol{\lambda }\cdot d\boldsymbol{\lambda }+A_2 (\boldsymbol{\lambda }\cdot d\boldsymbol{\lambda })^2+A_3 \boldsymbol{\sigma  }\cdot \boldsymbol{\sigma  }+A_4(\boldsymbol{\lambda }\cdot \boldsymbol{\sigma  })^2+A_5 \boldsymbol{\lambda }\cdot (\boldsymbol{\sigma }\times d\boldsymbol{\lambda }),
\end{equation}
where  $A_1,\dots,A_5$ are smooth functions of $\lambda$ only defined by the single function
\begin{equation}\label{A}
A(\lambda )=2\pi \mu \;\frac{ [\mu ^4-4\mu ^2 \log\mu -1]}{(\mu ^2-1)^3},
\end{equation}
 as follows
\begin{align}
A_1&=A(\lambda ),\qquad A_2=\frac{A(\lambda )}{\Lambda ^2}+\frac{A'(\lambda )}{\lambda },\qquad A_3=(\frac{1+2\lambda ^2}{4})A(\lambda ),\nonumber\\
A_4&=\biggl(\frac{\Lambda^2 }{4\lambda }\biggr)A'(\lambda ),\qquad A_5=A(\lambda ).\label{function}
\end{align}
In (\ref{metric}) $\sigvec=(\sigma_1,\sigma_2,\sigma_3)$ is the triple of left invariant one forms on $SO(3)$ dual 
to the left invariant vector fields $\theta_1,\theta_2,\theta_3$ which, at the identity $\I_3$,
 coincide with the usual basis for $\so(3)$,
that is
\beq
E_1=\left(\begin{array}{ccc}0&0&0\\0&0&-1\\0&1&0\end{array}\right), \qquad
E_2=\left(\begin{array}{ccc}0&0&1\\0&0&0\\-1&0&0\end{array}\right), \qquad
E_3=\left(\begin{array}{ccc}0&-1&0\\1&0&0\\0&0&0\end{array}\right).
\eeq
Also,  $\times$ denotes vector product in $\R^3$, and 
juxtaposition of one-forms denotes symmetrized tensor product. Note that, although we are using similar
notation to that of section \ref{subsec:H2}, the functions $A_i$ and one-forms $\sigma_i$ in (\ref{metric}) are unrelated to the
analogous quantities defined there.  It follows immediately from (\ref{metric}) that $\rat_1$ is
geodesically incomplete (for example, the curve $(\I_3,(0,0,s))$ with $s\in\R$ has finite length). 

Since $\rat_1$ has trivial second cohomology, its Ricci form $\rho$ is necessarily exact. An explicit formula for $\rho$
was derived in \cite{spe-l2}, from which it quickly follows that
\begin{equation}\label{1form}
\rho=\d\aa,\qquad \aa=\frac{\Lambda }{2}\bar{A}(\lambda )\:\;(\boldsymbol{\lambda }\cdot \boldsymbol{\sigma }),
\end{equation}
where
\begin{equation}\label{barA}
\bar{A}(\lambda )=-\frac{1}{2\lambda }\frac{d}{d\lambda } \log(A^2(\lambda )\;B(\lambda )),
\end{equation}
and 
\begin{equation}\label{B}
B(\lambda ):=A_3(\lambda )+\lambda ^2 A_4(\lambda )=\frac{1+2\lambda ^2}{4} A(\lambda )+\frac{\lambda \Lambda ^2}{4} A'(\lambda )=\frac{\Lambda }{4} \frac{d}{d\lambda } (\lambda  \Lambda  A(\lambda )).
\end{equation}
Hence RMG flow on $\rat_1$ is governed by the Lagrangian
\beq
L=\frac{1}{2}[ A_1 (\dot{\boldsymbol{\lambda }} \cdot \dot{\boldsymbol{\lambda }}) +A_2 (\boldsymbol{\lambda } \cdot \dot{\boldsymbol{\lambda }})^2 +A_3 (\boldsymbol{\Omega  } \cdot \boldsymbol{\Omega }) +A_4 (\boldsymbol{\lambda } \cdot \boldsymbol{\Omega })^2 +A_5 \boldsymbol{\lambda } \cdot (\boldsymbol{\Omega }\times \dot{\boldsymbol{\lambda }})
 -  \Lambda \bar{A} (\boldsymbol{\lambda } \cdot \boldsymbol{\Omega })]\label{L}
\eeq
for a curve $\chi (t)=({\cal O}(t), \boldsymbol{\lambda }(t))$ on $\rat_1$ whose angular velocity $\Omegavec\in\R^3$ is defined
such that
${\cal O}(t)^{-1}\dot{\cal O}(t)=\Omegavec\cdot\Evec\in\so(3)$.
Note that, to avoid confusion with the radial coordinate $\lambda=\|\lamvec\|$, we have used the scaling property of
RMG flow to set the effective electric charge (denoted $\lambda$ in section \ref{sec:intro}) to unity.
This flow conserves total energy
\begin{equation}\label{E}
E=\frac{1}{2}[ A_1 (\dot{\boldsymbol{\lambda }} \cdot \dot{\boldsymbol{\lambda }}) +A_2 (\boldsymbol{\lambda } \cdot \dot{\boldsymbol{\lambda }})^2 +A_3 (\boldsymbol{\Omega  } \cdot \boldsymbol{\Omega }) +A_4 (\boldsymbol{\lambda } \cdot \boldsymbol{\Omega })^2 +A_5 \boldsymbol{\lambda } \cdot (\boldsymbol{\Omega }\times \dot{\boldsymbol{\lambda }})].
\end{equation}
Furthermore, we have

\begin{proposition}\label{iwtssyc}
RMG flow on $\rat_1$ conserves the angular momenta $\{P_k,Q_k:k=1,2,3\}$ given by
\begin{align}
P_k&=\sum_{j=1,2,3}{\cal O}_{jk} [A_3 \Omega_j +A_4(\boldsymbol{\Omega} \cdot  \boldsymbol{\lambda }) \lambda_j +\frac{1}{2} A_1 (\dot{\boldsymbol{\lambda} } \times \boldsymbol{\lambda}))_j- \frac{1 }{2} \Lambda \bar{A}\lambda_j ],\label{P}\\
Q_k&=(A_3-\frac{1}{2} \lambda ^2 A_1) \Omega_k +(A_4+\frac{1}{2} A_1) (\boldsymbol{\Omega} \cdot \boldsymbol{\lambda }) \lambda_k  -\frac{1}{2} A_1 (\dot{\boldsymbol{\lambda }}\times \boldsymbol{\lambda })_k- \frac{1 }{2} \Lambda \bar{A} \lambda_k.\label{Q}
\end{align}
\end{proposition}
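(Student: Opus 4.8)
The plan is to derive both families of conservation laws from Noether's theorem, applied to the isometric action of $G = SO(3)\times SO(3)$ on $(\rat_1,\gamma_{L^2})$.

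The only fact to establish beyond what has already been recorded is that the particular potential $\aa = \tfrac{\Lambda}{2}\bar{A}(\lamvec\cdot\sigvec)$ of (\ref{1form}) is \emph{itself} $G$-invariant, not merely invariant up to an exact form. For the $SO(3)$ factor acting by left translation on the $SO(3)$ factor of $\rat_1$ this is immediate: the $\sigma_i$ are left-invariant one-forms, while $\lamvec,\Lambda,\bar{A}$ depend only on the fixed quantity $\lamvec$. For the other $SO(3)$, acting by $({\cal O},\lamvec)\mapsto({\cal O}{\cal R}^{-1},{\cal R}\lamvec)$, one checks that the one-form $\lamvec\cdot\sigvec$ sends a tangent vector to $\lamvec\cdot\Omegavec$, where $\Omegavec$ is its body angular velocity, and that under this action both $\lamvec$ and $\Omegavec$ are rotated by ${\cal R}$, leaving $\lamvec\cdot\Omegavec$ — hence $\aa$ — unchanged. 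Since $\gamma_{L^2}$ is $G$-invariant and $H^2(\rat_1) = 0$, it follows that $G$ preserves the globally defined RMG Lagrangian $L = \tfrac12\|\dot\chi\|^2 - \aa(\dot\chi)$ of (\ref{L}), whose Euler--Lagrange flow is RMG flow.

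Noether's theorem then applies directly: for each $X$ in the Lie algebra of $G$ the function $J_X := g(\dot\chi,X) - \aa(X)$ — the canonical momentum $\partial L/\partial\dot q^i$ contracted with $X$ — is constant along every RMG curve. I would split a basis of this six-dimensional algebra into the generators $\xi^L_k$ of the ``internal'' $SO(3)$ (left translation, $\lamvec$ fixed) and $\xi^R_k$ of the ``spatial'' $SO(3)$. In the $({\cal O},\lamvec)$ chart, $\xi^L_k$ at a point $({\cal O},\lamvec)$ equals $(E_k{\cal O},\,0)$, with vanishing $\lamvec$-component and body angular velocity ${\cal O}^{-1}E_k{\cal O}$; and $\xi^R_k$ equals $(-{\cal O}E_k,\,E_k\lamvec)$, with $\lamvec$-component $E_k\lamvec$ and constant body angular velocity $-E_k$.

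It then remains to substitute these six vector fields into $J_X$, using the explicit metric (\ref{metric}) with $A_5 = A_1$ from (\ref{function}) and $\aa$ from (\ref{1form}), and to simplify using the scalar triple product identity and $\vc{a}\times(\vc{b}\times\vc{c}) = \vc{b}(\vc{a}\cdot\vc{c}) - \vc{c}(\vc{a}\cdot\vc{b})$; this reproduces $P_k$ of (\ref{P}) from $\xi^L_k$ and $Q_k$ of (\ref{Q}) from $\xi^R_k$, with overall signs and index placements fixed by the conventions of (\ref{L}). That computation is purely mechanical. The one genuinely delicate point is the one highlighted in the second paragraph: that it is $\aa$, and not merely $\rho = \d\aa$, that is $G$-invariant. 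Were that to fail, one would instead have to produce for each $X$ a function $h_X$ with $\iota_X\rho = \d h_X$ — possible since $H^1(\rat_1) = 0$ — and use $J_X = g(\dot\chi,X) + h_X$, losing the clean closed forms (\ref{P}), (\ref{Q}); beyond that, the only real care needed is to keep the body-frame bookkeeping consistent with the convention ${\cal O}^{-1}\dot{\cal O} = \Omegavec\cdot\Evec$.
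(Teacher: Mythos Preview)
Your proposal is correct and follows essentially the same route as the paper: Noether's theorem applied to the $G=SO(3)\times SO(3)$ symmetry of the Lagrangian (\ref{L}), with the key observation that the potential $\aa$ itself (not merely $\rho=\d\aa$) is $G$-invariant, so the Noether charges take the simple form $J_X=\gamma_{L^2}(\dot\chi,X)-\aa(X)$ with no correction term. Your explicit verification of the $G$-invariance of $\aa$ under the second $SO(3)$ factor is in fact more detailed than the paper's, which simply asserts it; otherwise the identification of the six Killing fields and the mechanical substitution are the same.
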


\begin{proof}
The RMG Lagrangian, given in (\ref{L}), has $G=SO(3)\times SO(3)$ symmetry.  Hence, there is a set of six conserved angular 
momenta, one for each generator of $G$. Given $\bar{Y}\in\g=\so(3)\oplus\so(3)$, denote by $Y$ the killing vector field on $\rat_1$
which it induces. Then the conserved Noether charge associated with the infinitesimal symmetry $\bar{Y}$ is
\cite{mansut}
\begin{equation}\label{iltwsyjo}
\text{J}_Y=\gamma_{L^2} (Y,\dot{\chi })-\aa(Y)+\alpha_Y,
\end{equation}
where $\alpha_Y$ is a real function  on $\rat_1$ such that $\d\alpha_Y=\mathcal{L}_Y\aa$. Since $\aa$ is $G$-invariant, 
$\mathcal{L}_Y\aa=0$ for all $\bar{Y}$, so we may take $\alpha_Y=0$ for all $\bar{Y}$.

The six killing vector fields on $\rat_1= SO(3)\times\R^3$ generated by the usual basis for $\g$ are \cite{kruspe-lump}
\beq
\xi_k,\qquad Z_k=\theta_k+\sum_{i,j}\epsilon_{kij}\lambda_i\frac{\cd\: }{\cd \lambda_j},\qquad k=1,2,3
\eeq
where $\theta_i$ are, as before, the left invariant vector fields on $SO(3)$ dual to $\sigma_i$, and $\xi_i$ are the
{\em right} invariant vector fields on $SO(3)$ with $\xi_i(\I_3)=\theta_i(\I_3)$, explicitly, 
\begin{equation}\label{basis1}
\xi _k=\sum_j{\cal O}_{jk}\;\theta _j.
\end{equation}
Setting $Y=\xi_k$ in (\ref{iltwsyjo}) yields the conserved charge $\text{J}_Y=P_k$ claimed, and similarly
setting $Y=Z_k$ yields the charge $Q_k$.
\end{proof}

It is convenient to collect the angular momenta  $\{P_k:k=1,2,3\}$ and $\{Q_k:k=1,2,3\}$ into a pair of 3-vectors
\begin{align}
\boldsymbol{P}&={\cal O}^T [A_3 \boldsymbol{\Omega} +A_4(\boldsymbol{\Omega} \cdot  \boldsymbol{\lambda }) \boldsymbol{\lambda} +\frac{1}{2} A_1 (\dot{\boldsymbol{\lambda} } \times \boldsymbol{\lambda}))- \frac{1 }{2} \Lambda \bar{A} \boldsymbol{\lambda} ],\label{P} \\
\boldsymbol{Q}&=(A_3-\frac{1}{2} \lambda ^2 A_1) \boldsymbol{\Omega} +(A_4+\frac{1}{2} A_1) (\boldsymbol{\Omega} \cdot \boldsymbol{\lambda }) \boldsymbol{\lambda}  -\frac{1}{2} A_1 (\dot{\boldsymbol{\lambda }}\times \boldsymbol{\lambda })- \frac{1 }{2} \Lambda \bar{A} \boldsymbol{\lambda}.\label{Q}
\end{align}
Having determined the conserved quantities  $E$,  $\boldsymbol{P}$ and $\boldsymbol{Q}$ associated with the RMG flow on $\rat_1$, 
one can  eliminate $\boldsymbol{\Omega }$ from $E$ to obtain
\begin{align}
E=& \frac{1}{2}\biggl( A_1 \| \dot{\boldsymbol{\lambda }} \| ^2 +A_2 (\boldsymbol{\lambda } \cdot \dot{\boldsymbol{\lambda }})^2 +\frac{1}{A_3}  \|  \boldsymbol{P} \|  ^2 -\frac{A_1^2}{4 A_3}  \|  \dot{\boldsymbol{\lambda }}\times \boldsymbol{\lambda} \|  ^2 + \frac{ \Lambda \bar{A} }{A_3} (\boldsymbol{Q} \cdot \boldsymbol{\lambda })\nonumber \\
&- \frac{A_4}{ A_3 B}\;[(\boldsymbol{Q} \cdot \boldsymbol{\lambda })+\frac{ 1}{2}\lambda ^2 \Lambda\bar{A}]^2+\frac{\lambda ^2 \Lambda^2 \bar{A}^2}{4 A_3} \biggr).\label{EE}
\end{align}

Consider the mapping $q:T\rat_1\ra\R\times\R^3\times\R^3$ which assigns to each tangent vector the triple 
$(E,\boldsymbol{P},\boldsymbol{Q})$. By Proposition \ref{iwtssyc}, every RMG curve in $\rat_1$ is confined to some
level set of $q$. That RMG flow is complete will follow quickly from the following:

\begin{thm}\label{xcomp}
 Every level set of $q:T\rat_1\ra\R\times\R^3\times\R^3$ is compact.
\end{thm}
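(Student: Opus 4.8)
The plan is to show that every level set $q^{-1}(E_0,\boldsymbol{P}_0,\boldsymbol{Q}_0)$ is a closed and bounded subset of $T\rat_1$, where boundedness is measured with respect to a complete background metric on $T\rat_1$ (for instance, pulling back a complete metric from the factor $\R^3$ and combining with the compact $SO(3)$ directions and the fibre directions). Closedness is immediate, since $q$ is continuous (indeed smooth) and the level set is the preimage of a point. The real work is compactness, which I would reduce to showing that on a fixed level set: (i) the base coordinate $\lambda=\|\boldsymbol{\lambda}\|$ stays in a compact interval $[0,\Lambda_{\max}]$, and (ii) all the velocity data $(\dot{\boldsymbol{\lambda}},\boldsymbol{\Omega})$ stay bounded. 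The $SO(3)$ and angular ($\boldsymbol{\lambda}/\lambda$-direction) parts of the base are automatically in a compact set, so only the radial coordinate is at issue; thus the crux is controlling $\lambda$ from above and controlling the fibre coordinates.

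First I would record the asymptotics of the metric functions. From (\ref{A}) one has $\mu\to\infty$ as $\lambda\to\infty$ with $\mu\sim 2\lambda$, and a short computation gives $A(\lambda)\sim 4\pi\mu/\mu^2\cdot\mu^2\cdot\ldots$ — more carefully, $A(\lambda)\sim 4\pi/\mu\sim 2\pi/\lambda\to 0$, while $A_3=\tfrac{1+2\lambda^2}{4}A(\lambda)\sim \pi\lambda\to\infty$, $B(\lambda)=\tfrac{\Lambda}{4}\tfrac{d}{d\lambda}(\lambda\Lambda A(\lambda))$ stays bounded away from $0$ and $\infty$ since $\lambda\Lambda A(\lambda)\to$ const times $\lambda$, and hence $\bar A(\lambda)=-\tfrac{1}{2\lambda}\tfrac{d}{d\lambda}\log(A^2 B)$ behaves like $+1/\lambda^2\cdot(\text{positive const})$ or so. One then examines the expression (\ref{EE}) for $E$ term by term as $\lambda\to\infty$, with $\boldsymbol{P},\boldsymbol{Q}$ fixed. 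The dangerous point is that $A_3\to\infty$ in denominators could make terms vanish rather than blow up; the key is the term $\tfrac{1}{A_3}\|\boldsymbol{P}\|^2$ (harmless, vanishes) versus the potential-like pieces involving $\Lambda\bar A$. I expect that the combination $\tfrac{\Lambda\bar A}{A_3}(\boldsymbol{Q}\cdot\boldsymbol{\lambda}) - \tfrac{A_4}{A_3 B}[\ldots]^2 + \tfrac{\lambda^2\Lambda^2\bar A^2}{4A_3}$ produces a strictly positive effective potential growing (or at least bounded below by a positive constant times a growing function) as $\lambda\to\infty$. Combined with the manifest non-negativity of the kinetic terms $A_1\|\dot{\boldsymbol{\lambda}}\|^2+A_2(\boldsymbol{\lambda}\cdot\dot{\boldsymbol{\lambda}})^2 - \tfrac{A_1^2}{4A_3}\|\dot{\boldsymbol{\lambda}}\times\boldsymbol{\lambda}\|^2$ (which must be checked to be a positive-definite quadratic form in $\dot{\boldsymbol{\lambda}}$ — this follows from positive-definiteness of $\gamma_{L^2}$ itself), fixing $E=E_0$ forces $\lambda\le\Lambda_{\max}(E_0,\boldsymbol{Q}_0)$.

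Once $\lambda$ is confined to a compact interval, all the functions $A_1,\ldots,A_5,\bar A$ are bounded with $A_1,A_3,B$ bounded below by positive constants on that interval. Then I would argue boundedness of the fibre data directly: the quadratic form in $\dot{\boldsymbol{\lambda}}$ appearing in $E$ is uniformly positive-definite on the compact $\lambda$-range, so $E=E_0$ bounds $\|\dot{\boldsymbol{\lambda}}\|$; and then from (\ref{P}) or (\ref{Q}), solving for $\boldsymbol{\Omega}$ (the linear map $\boldsymbol{\Omega}\mapsto A_3\boldsymbol{\Omega}+A_4(\boldsymbol{\Omega}\cdot\boldsymbol{\lambda})\boldsymbol{\lambda}$ is invertible with uniformly bounded inverse, again by positive-definiteness of the metric), $\boldsymbol{\Omega}$ is bounded in terms of $\boldsymbol{P}_0$ (or $\boldsymbol{Q}_0$), $\|\dot{\boldsymbol{\lambda}}\|$ and $\lambda$. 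Hence the level set is contained in a set of the form $\{SO(3)\}\times\{|\boldsymbol{\lambda}|\le\Lambda_{\max}\}\times\{\|\dot{\boldsymbol{\lambda}}\|\le c_1\}\times\{\|\boldsymbol{\Omega}\|\le c_2\}$, which is compact.

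The main obstacle I anticipate is step (i): verifying that the $\lambda\to\infty$ behaviour of the $\boldsymbol{Q}$- and $\bar A$-dependent terms in (\ref{EE}) really does give a coercive (confining) contribution rather than cancelling. This requires the precise leading asymptotics of $A(\lambda)$, $\bar A(\lambda)$, $A_4(\lambda)$, $B(\lambda)$ from (\ref{A})–(\ref{B}), and careful bookkeeping of signs — in particular checking that $A_4/(A_3 B)$ does not overwhelm the $\lambda^2\Lambda^2\bar A^2/(4A_3)$ term. A clean way to organize this is to complete the square in $(\boldsymbol{Q}\cdot\boldsymbol{\lambda})$ in the last line of (\ref{EE}) and show the resulting $\lambda$-dependent coefficient of the remaining pure-potential term diverges to $+\infty$; this isolates exactly which asymptotic estimate is needed. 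Everything else is routine linear algebra using the positive-definiteness of $\gamma_{L^2}$ established in \cite{spe-l2}.
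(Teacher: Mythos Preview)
Your outline is correct and is exactly the route the paper takes: closedness is immediate, boundedness of $\|\dot{\boldsymbol{\lambda}}\|$ and $\|\boldsymbol{\Omega}\|$ follows easily once $\lambda$ is confined to a compact interval, and the real work is showing that the ``potential'' part of (\ref{EE}) (the terms not involving $\dot{\boldsymbol{\lambda}}$) is coercive in $\lambda$. The paper verifies, just as you suggest, that the kinetic block $H(\boldsymbol{\lambda},\dot{\boldsymbol{\lambda}})$ is nonnegative, leaving $2E\geq G(\boldsymbol{\lambda},\boldsymbol{Q}):=F_1(\lambda)(\boldsymbol{Q}\cdot\hat{\boldsymbol{\lambda}})^2+F_2(\lambda)(\boldsymbol{Q}\cdot\hat{\boldsymbol{\lambda}})+F_3(\lambda)$, and then studies the large-$\lambda$ behaviour of $G$.

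However, you have underestimated the delicacy of this coercivity check, and your preliminary asymptotics are wrong. Since $(\Lambda-\lambda)(\Lambda+\lambda)=1$ one has $\mu=(\Lambda+\lambda)^2\sim 4\lambda^2$, not $\mu\sim 2\lambda$; this changes everything downstream (for instance $A_3\to\pi/4$ rather than $A_3\to\infty$). With the correct asymptotics one finds $F_i(\lambda)\sim\alpha_i\,\lambda^4/\log\lambda$ for explicit constants, and the leading-order behaviour of $G$ turns out to be
\[
G(\boldsymbol{\lambda},\boldsymbol{Q})\sim\frac{4}{\pi}\,\frac{\lambda^4}{\log\lambda}\,\bigl[(\boldsymbol{Q}\cdot\hat{\boldsymbol{\lambda}})+2\bigr]^2.
\]
This is a perfect square that \emph{vanishes} when $\boldsymbol{Q}\cdot\hat{\boldsymbol{\lambda}}=-2$, so the leading order is \emph{not} coercive. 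Your plan to complete the square and show that the residual $F_3-F_2^2/(4F_1)$ diverges is exactly the right move, but this discriminant itself suffers a double cancellation: its expansion in $x=1/\log\lambda$ begins only at order $x^2$, and the sign of that $x^2$-coefficient is what decides the theorem. Verifying it is positive requires expanding $F_1,F_2,F_3$ to three orders in $1/\log\lambda$ and checking a nontrivial algebraic inequality among the nine resulting coefficients. This is precisely the content of the paper's Lemma~\ref{lem1}, whose proof is relegated to an appendix; it is the genuine crux of the argument, and nothing in your sketch suggests you have anticipated it.
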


\begin{proof} Choose and fix $(E,\boldsymbol{P},\boldsymbol{Q})\in\R^7$ and let $X=q^{-1}(E,\boldsymbol{P},\boldsymbol{Q})
\subset T\rat_1$. Now $T\rat_1\equiv TSO(3)\times T\R^3$, and $TSO(3)\equiv SO(3)\times\R^3$ via the identification 
$\dot{\cal O}\mapsto\Omegavec$. We may realize $SO(3)$ as a submanifold of $\R^9$ by mapping ${\cal O}$ to its list of
matrix elements. In this way, we may regard $T\rat_1$ as a 12-dimensional submanifold of $\R^9\times\R^3\times\R^3\times\R^3$.
The mapping $q$ is smooth, hence certainly continuous, so $X$ is closed. Hence, by Heine-Borel, it suffices to show that
$X\subset\R^{18}$ is bounded in euclidean norm.

Assume, towards a contradiction, that there is some sequence $x_n=({\cal O}_n,\Omegavec_n,\lamvec_n,\dot\lamvec_n)\in X$
which is unbounded in euclidean norm. By the definition of $SO(3)$, $\|{\cal O}_n\|_{\R^9}=\sqrt{3}$ for all $n$, so 
(at least) one of
$\lamvec_n,\dot\lamvec_n,\Omegavec_n$ must be unbounded. We will now eliminate these possibilities in turn.

Assume $\lamvec_n$ is unbounded. Then it has a subsequence, which we still denote $\lamvec_n$, with
$\|\lamvec_n\|\ra\infty$.
For all $\lamvec\neq\zerovec$ let $\hat\lamvec=\lamvec/\|\lamvec\|$ and
\begin{align}
H(\boldsymbol{\lambda },\dot{\boldsymbol{\lambda }})&:=A_1 \| \dot{\boldsymbol{\lambda }}\| ^2 + \lambda ^2 A_2 (\hat{\boldsymbol{\lambda }} \cdot \dot{\boldsymbol{\lambda }})^2 -\frac{ \lambda ^2 A_1^2}{4 A_3} \| \hat{\boldsymbol{\lambda }}\times  \dot{\boldsymbol{\lambda }}\| ^2,\label{H}\\
G(\boldsymbol{\lambda },\boldsymbol{Q})&:=F_1(\lambda )\; (\boldsymbol{Q} \cdot \hat{ \boldsymbol{\lambda }})^2 \;+ F_2(\lambda )\;(\boldsymbol{Q} \cdot \hat{\boldsymbol{\lambda }}) \;+F_3(\lambda ),\label{G}
\end{align}
where 
\begin{equation}\label{FFF}
F_1(\lambda )=-\frac{\lambda ^2 A_4}{A_3 B},\qquad  F_2(\lambda )=\frac{\lambda  \Lambda \bar{A}}{A_3}\biggl(1-\frac{\lambda ^2 A_4}{B}\biggr), \qquad F_3(\lambda )=\frac{\lambda ^2 \Lambda ^2 \bar{A}^2}{4 B}.
\end{equation}
Then, the conserved energy $E$, given in (\ref{EE}), can be written as
\begin{equation}\label{EEEE}
 E = \frac{1}{2}\biggl(H(\boldsymbol{\lambda },\dot{\boldsymbol{\lambda }}) +\frac{1}{A_3}\|  \boldsymbol{P} \|  ^2 + G(\boldsymbol{\lambda },\boldsymbol{Q})\biggr).
\end{equation}
Since the cross and dot products on $\mathbb{R}^3$ are related by
\begin{equation}\label{L-L}
\| \hat{\boldsymbol{\lambda }} \times  \dot{\boldsymbol{\lambda }}\| ^2= \|\dot{\boldsymbol{\lambda }}\| ^2-(\hat{\boldsymbol{\lambda }} \cdot \dot{\boldsymbol{\lambda }})^2,
\end{equation} 
then,
\begin{align}
H(\boldsymbol{\lambda} ,\dot{\boldsymbol{\lambda }})&=\frac{\Lambda ^2 A}{1+2\lambda ^2}\;   \|\dot{\boldsymbol{\lambda }}\|^2 + \lambda ^2\biggl(\frac{2+3\lambda ^2}{(1+2\lambda ^2) \Lambda ^2} A+\frac{A'(\lambda )}{\lambda }\biggr ) (\hat{\boldsymbol{\lambda }} \cdot \dot{\boldsymbol{\lambda }})^2.\label{H-H}
\end{align}
Here,   we have used the definition of $A_1,A_2$ and $A_3$, as in (\ref{function}). Since $B(\lambda )$, given in (\ref{B}), is positive,
\begin{equation}\label{A-A}
\frac{A'(\lambda )}{\lambda } > - \frac{1+2\lambda ^2}{\lambda ^2 \Lambda ^2} A,
\end{equation}
from which it follows that 
\begin{equation}\label{H}
H(\boldsymbol{\lambda },\dot{\boldsymbol{\lambda }})\geq  \frac{\Lambda ^2 A}{1+2\lambda ^2}\; \| \hat{\boldsymbol{\lambda }} \times \dot{\boldsymbol{\lambda} } \| ^2\geq 0.
\end{equation}
Since both $H(\boldsymbol{\lambda },\dot{\boldsymbol{\lambda }})$  and $A_3 $ are non-negative, it follows from (\ref{EEEE}) that
\begin{equation}\label{inqEG}
2 E \geq   G(\boldsymbol{\lambda }_n,\boldsymbol{Q}).
\end{equation}
From (\ref{A}), one obtains the following limit 
\begin{equation}\label{limG}
\lim_{\lambda \rightarrow \infty }\; \frac{\log \lambda }{\lambda ^4}\;G(\boldsymbol{\lambda },\boldsymbol{Q})= \frac{4}{\pi }[(\boldsymbol{Q}\cdot \hat{\boldsymbol{\lambda }}) + 2]^2.
\end{equation}
But $\|\lamvec_n\|\ra\infty$, so (\ref{limG}) contradicts (\ref{inqEG}) unless $\|\boldsymbol{Q}\|=2$.

Hence   $\| \boldsymbol{Q}\| =2$. Let $\theta $ be the angle between $\boldsymbol{\lambda }$ and $\boldsymbol{Q}$, namely,
\begin{equation}
\boldsymbol{Q}\cdot \hat{\boldsymbol{\lambda }}=\| \boldsymbol{Q}\|  \cos \theta =2 \cos \theta.
\end{equation}
Then, it follows from (\ref{G}) that
\begin{equation}\label{Z}
G(\boldsymbol{\lambda },\boldsymbol{Q})= 4 F_1(\lambda ) \cos^2 \theta  +2 F_2(\lambda ) \cos \theta  +F_3(\lambda )=:Z(\lambda ,\theta ).
\end{equation}
We shall appeal to the following technical lemma whose proof we postpone to an appendix.

\begin{lemma}\label{lem1}
On $(\rat_1, \gamma _{L^2})$, there exist $c_0, \lambda _0 >0 $ such that for all $\lambda \geq \lambda _0$, $Z(\lambda ,\theta )$, given in (\ref{Z}), satisfies
\begin{equation}\label{Z1}
Z(\lambda ,\theta ) > \frac{c_0 \lambda ^4}{(\log \lambda )^3}, \quad \forall\:  \theta \in \mathbb{R}.
\end{equation}
\end{lemma}

\noindent Using the above lemma, it follows from (\ref{inqEG}), (\ref{Z}) and (\ref{Z1}) that, for all $n$ sufficiently large,
\begin{equation}
2 E \geq G(\boldsymbol{\lambda }_n,\boldsymbol{Q}) > \frac{c_0 \|\lamvec_n\| ^4}{(\log \|\lamvec_n\| )^3},
\end{equation}
a contradiction. Hence $\lamvec_n$ is bounded.

Assume now that $\dot\lamvec_n$ is unbounded. We have shown that $\|\lamvec_n\|$ is confined to a closed bounded interval,
so $A_i(\|\lamvec_n\|)$, $B(\|\lamvec_n\|)$ are positive, bounded and bounded away from zero, 
and $\ol{A}(\|\lamvec_n\|)$ is bounded, by continuity. Hence, from (\ref{EE}) we see that
\beq
2E>A_1(\|\lamvec_n\|)\left(1-\frac{\|\lamvec_n\|^2A_1(\|\lamvec_n\|)}{4A_3(\|\lamvec_n\|)}\right)\|\dot\lamvec_n\|^2+c
=A_1(\|\lamvec_n\|)\frac{2\|\lamvec_n\|^2}{1+2\|\lamvec_n\|^2}\|\dot\lamvec_n\|^2+c
\eeq
for some constant $c\in\R$. But this contradicts unboundedness of $\dot\lamvec_n$.

Finally, assume that $\|\Omegavec_n\|$ is unbounded. We have already shown that $\|\lamvec_n\|$ and $\|\dot\lamvec_n\|$ are 
bounded, and by continuity,
$A_i(\|\lamvec_n\|)$ are positive, bounded and bounded away from $0$. But this immediately contradicts (\ref{E}).
\end{proof}

\begin{cor}\label{ratcomp} $(\rat_1,\gamma_{L^2})$ is RMG complete.
\end{cor}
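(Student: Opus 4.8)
The plan is to combine Theorem \ref{xcomp} with the standard escape lemma for flows of smooth vector fields. The RMG equation (\ref{rmg}) is a second order ODE on $\rat_1$ with smooth coefficients, since $\gamma_{L^2}$ and the Ricci form $\rho$ are smooth on $\rat_1$; rewritten as a first order system in the usual way, it therefore defines a smooth autonomous vector field $\Xi$ on the total space $T\rat_1$ whose integral curves are exactly the lifts $t\mapsto(\alpha(t),\dot\alpha(t))$ of RMG curves $\alpha$. By Picard--Lindel\"of, given any initial data $(x,v)\in T\rat_1$ there is a unique maximal integral curve $c:(t_-,t_+)\ra T\rat_1$ of $\Xi$ with $c(0)=(x,v)$; the corresponding RMG curve is $\alpha=\pi\circ c$, where $\pi:T\rat_1\ra\rat_1$ is the projection.

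First I would invoke Proposition \ref{iwtssyc}: $E$, $\boldsymbol{P}$, $\boldsymbol{Q}$ are constant along $c$, so the image of $c$ lies entirely in the single level set $X:=q^{-1}\bigl(E(x,v),\boldsymbol{P}(x,v),\boldsymbol{Q}(x,v)\bigr)\subset T\rat_1$. By Theorem \ref{xcomp}, $X$ is compact. Now apply the escape lemma: if, say, $t_+<\infty$, pick $t_n\nearrow t_+$; since $c(t_n)\in X$ compact, a subsequence converges to some $y\in X$, and local existence for $\Xi$ at $y$ lets one continue $c$ past $t_+$, contradicting maximality. (The same argument rules out $t_->-\infty$.) Hence $t_\pm=\pm\infty$, so the RMG curve through $(x,v)$ is defined for all $t\in\R$. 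As $(x,v)\in T\rat_1$ was arbitrary, $(\rat_1,\gamma_{L^2})$ is RMG complete.

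There is essentially no obstacle in this corollary itself: all of the work is done by Theorem \ref{xcomp} (and the technical Lemma \ref{lem1} underlying it). The only points deserving a word of care are that the RMG flow really is a smooth autonomous flow on $T\rat_1$ --- immediate from smoothness of $\gamma_{L^2}$ and $\rho$ --- and that $q$ is continuous, so that its level sets are closed subsets of $T\rat_1$; the latter was already noted in the proof of Theorem \ref{xcomp}.
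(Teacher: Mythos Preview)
Your argument is correct and follows essentially the same approach as the paper: both deduce completeness from Theorem \ref{xcomp} together with Proposition \ref{iwtssyc}, using that an integral curve of a smooth vector field confined to a compact set exists for all time. The paper carries this out by an explicit iteration (a uniform local existence time $T_K$ on a fixed $X_K\supset X$, then extending in steps of $2T_K$), whereas you invoke the escape lemma directly; these are equivalent packagings of the same standard fact.
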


\begin{proof} For each $K>0$, let $X_K=\{({\cal O},\Omegavec,\lamvec,\dot\lamvec\in T\rat_1\: :\: 
\|\Omegavec\|+\|\lamvec\|+\|\dot\lamvec\|)\leq K\}$. By a standard application of Picard's method, there exists $T_K>0$,
depending only on $K$, such that, for all $x_0\in X_K$ there exists a unique RMG curve $x:[-T_K,T_K]\ra X_{2K}$ with
$x(0)=x_0$. Now choose and fix $x_0\in T\rat_1$, and let $X=q^{-1}(q(x_0))$, the level set of $q$ containing $x_0$. By
Theorem \ref{xcomp}, there exists $K>0$ such that $X\subset X_K$. Hence there is a unique RMG curve $x:[-T_K,T_K]\ra X_{2K}$
with $x(0)=x_0$. But, by Proposition \ref{iwtssyc}, $x(\pm T_K)\in X\subset X_K$, so this solution can be extended, both forward
and backward in time, to $[-3T_K,3T_K]$, and $x(\pm 3T_K)\in X\subset X_{K}$ also. Proceeding inductively, the RMG curve has
an extension $x:\R\ra X_K$. Since $x_0$ was arbitrary, it follows that RMG flow is complete.
\end{proof}

\begin{remark} Theorem \ref{xcomp} is strictly stronger than Corollary \ref{ratcomp}, since it implies that every RMG 
curve in $(\rat_1,\gamma_{L^2})$ is confined to a compact subset of $\rat_1$ and hence is bounded away from the
boundary of $\rat_1$ at infinity. This is not true of complete RMG flows in general (consider for example the trivial RMG flow
on $\C^n$). 
\end{remark}

\begin{remark}\label{iblwltrmd}
 Since $(\rat_1,\gamma_{L^2})$ is known to be geodesically incomplete, it is a counterexample to the conjecture
\cite{kruspe-vortex} that every RMG complete k\"ahler manifold is geodesically complete. Simpler counterexamples can be 
constructed. For example the surface of revolution $\C$ given the metric $g={\rm sech}\,|z|\: dz d\bar z$ is manifestly geodesically 
incomplete and can be shown, by an energy/angular momentum conservation argument analogous to the one presented here for 
$\rat_1$, to be RMG complete \cite{alq-thesis}.
\end{remark}

The $L^2$ geometry of $\rat_n$, for $n\geq 2$, is comparatively poorly understood. It is known to be $G$-invariant, k\"ahler
and geodesically incomplete \cite{spe-l2}, and is conjectured to have finite total volume \cite{bap-l2}. Inside $\rat_n$
there is a topologically cylindrical submanifold, $\rat_n^{eq}$, the fixed point set of the 
circle group of isometries $W(z)\mapsto e^{in\alpha}W(e^{-i\alpha}z)$. This consists of rotationally equivariant 
rational maps, of the form
$W(z)=c z^n$, where $c\in\C^{\times}=\C\less\{0\}$, and is preserved by (extrinsic) RMG flow on $\rat_n$ by
Corollary \ref{hatu}. The induced $L^2$ metric on $\rat_n^{eq}$ was studied in detail in \cite{mcgspe}. 
It is interesting to compare the intrinsic RMG flow on $\rat_n^{eq}$ with the extrinsic RMG flow, defined by its inclusion in
$\rat_n$.

Denote by 
$\pi:\C^\times\ra\rat_n^{eq}$ the $n$-fold covering map $a\mapsto [W:z\mapsto (az)^n]$. Then the lifted $L^2$ metric on
$\C^\times$ is
\beq
\pi^*\gamma_{L^2}^{eq}=F(|a|)dad\bar{a},\qquad F(\rho)=\pi n^2\int_0^\infty\frac{s^n}{(1+s^n)^2}\frac{ds}{(\rho^2+s)^2}.
\eeq
Now $c\mapsto c^{-1}$ is an isometry of $\rat_n^{eq}$, whence it follows that $a\mapsto a^{-1}$ is an isometry of the lifted
metric. Furthermore, $\lim_{\rho\ra\infty}F(\rho)$ exists for all $n\geq 2$, so $\pi^*\gamma_{L^2}^{eq}$ has a $C^0$ extension
to $S^2=\C^\times\cup\{0,\infty\}$ for all $n\geq 2$, which we denote $\ol\gamma_n$. For $n$ sufficiently large, we can
obtain useful information about RMG flow on $(\rat_n^{eq},\gamma_{L^2}^{eq})$ by considering its lift to $(S^2,\ol\gamma_n)$. 
This requires us to establish enhanced regularity of $\ol\gamma_n$, as follows.

\begin{prop}\label{gamreg}
For all $n\geq 5$, the extended lifted metric $\ol\gamma_n$ on $S^2$ is $C^3$.
\end{prop}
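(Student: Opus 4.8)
The plan is to reduce the question to the behaviour of the single function $F(\rho)$ near $\rho=0$ and $\rho=\infty$, since away from these two points $\pi^*\gamma_{L^2}^{eq}=F(|a|)da\,d\bar a$ is manifestly smooth (the integrand defining $F$ is smooth in $\rho$ on any compact subinterval of $(0,\infty)$, and one may differentiate under the integral sign freely there). The two ends $a\to 0$ and $a\to\infty$ are exchanged by the isometry $a\mapsto a^{-1}$, so it suffices to analyse one of them, say a neighbourhood of $a=\infty$, where the good coordinate on $S^2$ is $w=1/a$; there the metric reads $\ol\gamma_n = F(1/|w|)\,|w|^{-4}\,dw\,d\bar w$. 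Writing $G(\rho):=\rho^{-4}F(1/\rho)$ for $\rho=|w|$ small, the task is exactly to show that the conformal factor $G(|w|)$, regarded as a function on a disc in the $w$-plane, is $C^3$ at $w=0$; by the $a\mapsto a^{-1}$ symmetry this simultaneously handles $a=0$, and combined with smoothness on $\C^\times$ it gives $\ol\gamma_n\in C^3(S^2)$.

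The key computational step is therefore an asymptotic expansion of $F(\rho)$ as $\rho\to\infty$ to enough orders. I would substitute $s=\rho^2 u$ (or rescale suitably) in
\[
F(\rho)=\pi n^2\int_0^\infty\frac{s^n}{(1+s^n)^2}\frac{ds}{(\rho^2+s)^2},
\]
and expand the factor $(\rho^2+s)^{-2}=\rho^{-4}(1+s/\rho^2)^{-2}=\rho^{-4}\sum_{k\geq 0}(-1)^k(k+1)(s/\rho^2)^k$ as a (for now formal) series, noting that $\int_0^\infty \frac{s^{n+k}}{(1+s^n)^2}\,ds$ converges precisely when $n+k < 2n-1$, i.e.\ $k\leq n-2$. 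This yields
\[
F(\rho)=\pi n^2\sum_{k=0}^{K}(-1)^k(k+1)\,I_k\,\rho^{-4-2k}+\text{(remainder)},\qquad I_k:=\int_0^\infty\frac{s^{n+k}}{(1+s^n)^2}\,ds,
\]
valid with control on the remainder as long as $K\leq n-2$; the $I_k$ are standard Beta-function values, $I_k=\frac{(k+1)\pi}{n^2\sin(\pi(k+1)/n)}$ or similar. Hence $G(\rho)=\rho^{-4}F(1/\rho)=\pi n^2\sum_{k=0}^{K}(-1)^k(k+1)I_k\,\rho^{2k}+\dots$ is a genuine power series in $\rho^2=|w|^2=w\bar w$ up to order $|w|^{2K}$ plus a controlled remainder. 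A function of the form $\sum_{k=0}^{K}c_k(w\bar w)^k + R(w)$ on a disc is $C^{m}$ provided $2K\geq m$ and the remainder $R$ is $C^m$; so to get $C^3$ regularity I need the expansion to be valid to order $|w|^3$, i.e.\ I need $K\geq 2$, i.e.\ $n-2\geq 2$, i.e.\ $n\geq 4$ — and the stated hypothesis $n\geq 5$ gives one order of slack, which is exactly what is needed to bound the remainder term (the first omitted integral $I_{n-1}$ diverges logarithmically, producing a remainder of size $\rho^{-4-2(n-2)}\log\rho = |w|^{2(n-2)}\log|w|^{-1}$ after passing to $G$, and one needs $2(n-2)\geq 3$ strictly, with the log absorbed, to conclude $C^3$; hence $n\geq 5$).

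The main obstacle, then, is not the leading expansion but the careful treatment of the \emph{remainder}: one must show that after subtracting the polynomial part, what is left is $C^3$ as a function of $w$ near $0$, including its mixed $w,\bar w$ derivatives up to total order $3$. Concretely, I would split the integral at $s=1$ (or at $s=\rho$), Taylor-expand $(\rho^2+s)^{-2}$ with integral-form remainder on the region $s\lesssim 1$, and on the region $s\gtrsim 1$ bound the contribution directly by $O(\rho^{-2n-2})$ or so, which is more than enough. Differentiating the resulting remainder integrals in $\rho$ and checking that each of the (at most three) derivatives picks up at most the expected powers of $\rho$ (with at worst a single logarithm, controlled since $2(n-2)>3$ for $n\geq 5$) finishes the proof; the symmetry $a\mapsto a^{-1}$ then transplants everything to the other pole, and smoothness on $\C^\times$ is immediate. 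I do not expect any subtlety beyond bookkeeping once the substitution and the convergence range $k\leq n-2$ are pinned down.
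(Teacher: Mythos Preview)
Your approach is sound in principle and would work with enough care, but it is considerably more laborious than the paper's argument, and there is one conceptual detour you could eliminate immediately. Since $a\mapsto a^{-1}$ is an isometry of the lifted metric, one has $F(1/\rho)\rho^{-4}=F(\rho)$, so your $G$ is literally equal to $F$; working at $a=\infty$ via a large-$\rho$ expansion of $F$ and then translating back to $G$ is the same as studying $F$ near $\rho=0$ directly. Once you notice this, the asymptotic-expansion-plus-remainder route becomes unnecessary.

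The paper's proof bypasses the expansion altogether. Working at $a=0$, it reduces $C^3$ regularity of $f(x,y)=F(\sqrt{x^2+y^2})$ at the origin to the two limits
\[
\lim_{\rho\to 0}\frac{1}{\rho}\Bigl(F''(\rho)-\frac{F'(\rho)}{\rho}\Bigr)=0,\qquad
\lim_{\rho\to 0}\Bigl(F'''(\rho)-\frac{3}{\rho}\bigl(F''(\rho)-\tfrac{F'(\rho)}{\rho}\bigr)\Bigr)=0,
\]
and then differentiates under the integral sign: with $\eta_{n,k}(\rho):=\int_0^\infty\frac{s^n}{(1+s^n)^2}\frac{ds}{(\rho^2+s)^k}$, these two quantities are $24\pi n^2\rho\,\eta_{n,4}(\rho)$ and $-129\pi n^2\rho^3\,\eta_{n,5}(\rho)$. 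Dominated convergence (dominating function $s^{n-k}/(1+s^n)^2$) gives $\lim_{\rho\to 0}\eta_{n,k}(\rho)<\infty$ whenever $k\le n$, so both limits vanish as soon as $n\ge 5$. No series truncation, no splitting of the integral, no remainder control.

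Your route reproduces the same information but packages it as a Taylor expansion of $F$ in $\rho^2$ with an integral remainder that must be split and estimated separately on $s\lesssim 1$ and $s\gtrsim 1$; each of up to three $\rho$-derivatives of the remainder then has to be bounded. (Incidentally, your remainder exponent is off by one: truncating at $K=n-2$ leaves a first omitted term at $k=n-1$, so the naive remainder in $F(\sigma)$ is of order $\sigma^{-2n-2}\log\sigma$, not $\sigma^{-2n}\log\sigma$; this only helps you.) Also, your criterion ``$2K\ge m$ and the remainder $R$ is $C^m$'' is slightly misleading: the polynomial part $\sum c_k(w\bar w)^k$ is $C^\infty$ regardless of $K$, so the entire burden is on the regularity of the remainder --- exactly the part you flag as ``the main obstacle''. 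The paper's direct-differentiation argument dispatches that obstacle in one step.
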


\begin{proof}
It is known \cite{mcgspe} that $\ol\gamma_n$ is $C^2$ for all $n\geq 4$. Further, $\ol\gamma_n$ is manifestly smooth on
$S^2\less\{0,\infty\}$ so, in light of the isometry $a\mapsto 1/a$, which interchanges $0$ and $\infty$, it suffices to prove
that $f_{xxx},f_{xxy},f_{xyy}$ and $f_{yyy}$ exist at $(0,0)$, where $f(x,y)=F(\sqrt{x^2+y^2})$. By computing in polar coordinates,
$(x,y)=\rho(\cos\theta,\sin\theta)$, one sees that all these third derivatives exist (and vanish) if and only if
\bea
\lim_{\rho\ra 0}\frac1\rho\left(F''(\rho)-\frac{F'(\rho)}{\rho}\right)&=&0\nonumber \\
\label{origirtenski}
\lim_{\rho\ra 0}\left(F'''(\rho)-\frac3\rho\left(F''(\rho)-\frac{F'(\rho)}{\rho}\right)\right)&=&0.
\eea
For each pair of integers $n\geq 2$ and $k\geq 0$, define the function $\eta_{n,k}:(0,\infty)\ra\R$,
\beq
\eta_{n,k}(\rho)=\int_0^\infty\frac{s^n}{(1+s^n)^2}\frac{ds}{(\rho^2+s)^k}.
\eeq
For $0\leq k\leq n$, its integrand is bounded above 
by the integrable function $s^{n-k}/(1+s^n)^2$, so, by the Lebesgue Dominated Convergence Theorem,
\beq\label{ogts}
\lim_{\rho\ra0}\eta_{n,k}(\rho)=\int_0^\infty\frac{s^{n-k}}{(1+s^n)^2}ds<\infty.
\eeq
It follows from the definition of $F$ that
\bea
\frac1\rho\left(F''(\rho)-\frac{F'(\rho)}{\rho}\right)&=&24\pi n^2\rho\eta_{n,4}(\rho)\\
F'''(\rho)-\frac3\rho\left(F''(\rho)-\frac{F'(\rho)}{\rho}\right)&=&-129\pi n^2\rho^3\eta_{n,5}(\rho).
\eea
Hence the required limits (\ref{origirtenski}) follow from (\ref{ogts}) provided $n\geq 5$.
\end{proof}

\begin{cor}\label{fagita} For all $n\geq 5$, the intrinsic RMG flow on $(\rat_n^{eq},\gamma_{L^2}^{eq})$ is incomplete.
\end{cor}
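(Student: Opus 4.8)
The plan is to realise $(\rat_n^{eq},\gamma_{L^2}^{eq})$, through the covering map $\pi$, as locally isometric to the complement of two points in the compact surface $(S^2,\ol\gamma_n)$, to observe that RMG flow on this compact surface is complete, and then to run an RMG curve on $(S^2,\ol\gamma_n)$ through one of the two missing points; its restriction to $\C^\times$, pushed down by $\pi$, is then an RMG curve on $(\rat_n^{eq},\gamma_{L^2}^{eq})$ that fails to exist for all time. The only point that needs care is the completeness statement for the compact surface, and that is precisely where the hypothesis $n\geq 5$ (rather than $n\geq 4$) is used.

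First I would record the elementary fact that \emph{if $(\Sigma,g)$ is a compact oriented Riemannian surface with $g$ of class $C^3$, then RMG flow on $\Sigma$ is complete and well defined}. By Remark \ref{jama}, an RMG curve on $\Sigma$ satisfies $\nabla^\alpha_{d/dt}\dot\alpha=\lambda\frac{S}{2}J\dot\alpha$; in a coordinate chart this is a second-order ODE whose right-hand side, viewed as a vector field on $T\Sigma$, involves the Christoffel symbols (of class $C^2$) and the scalar curvature $S$ (of class $C^1$), and is therefore locally Lipschitz. Picard's theorem then gives local existence and uniqueness of the RMG curve through any point of $T\Sigma$, and since RMG curves have constant speed by (\ref{rmg3}), each maximal solution is trapped in a compact sphere bundle and so exists for all $t\in\R$. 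Note that a $C^2$ metric would only give $S\in C^0$, and hence a right-hand side that need not be locally Lipschitz; this is why we cannot run the argument for $n=4$.

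Next I would apply this to $(S^2,\ol\gamma_n)$. For $n\geq 5$ the metric $\ol\gamma_n$ is $C^3$ by Proposition \ref{gamreg}, and it is positive-definite everywhere ($F(0)>0$, with the point $\infty$ handled via the isometry $a\mapsto1/a$), so its RMG flow is complete. Let $p_0\in S^2$ be the point $a=0$ and fix $v\in T_{p_0}S^2\less\{0\}$; let $\alpha:\R\to S^2$ be the RMG curve with $\alpha(0)=p_0$, $\dot\alpha(0)=v$. In a chart about $p_0$, $\alpha(t)=tv+O(t^2)$, so there is $\epsilon>0$ with $\alpha(t)\in\C^\times=S^2\less\{0,\infty\}$ for $0<t<\epsilon$. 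On $\C^\times$ we have the metric $\ol\gamma_n|_{\C^\times}=\pi^*\gamma_{L^2}^{eq}$, and $\C^\times\hra S^2$ is a holomorphic isometric embedding. Now $\beta:=\alpha|_{(0,\epsilon)}$ is an RMG curve on $\C^\times$ which cannot be extended to an RMG curve on $\C^\times$ over an interval containing $0$: any such extension would, by Proposition \ref{totrmg1}, be an RMG curve on $S^2$ agreeing with $\alpha$ on $(0,\epsilon)$, hence equal to $\alpha$ by the uniqueness established above, and $\alpha(0)=p_0\notin\C^\times$. Thus the maximal interval of existence of $\beta$ in $\C^\times$ has finite left endpoint.

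Finally I would transport $\beta$ to $\rat_n^{eq}$. The $n$-fold covering $\pi:\C^\times\to\rat_n^{eq}$, $a\mapsto[z\mapsto(az)^n]$, is a holomorphic local isometry, so by Proposition \ref{totrmg1} the curve $\pi\circ\beta$ is an RMG curve on $(\rat_n^{eq},\gamma_{L^2}^{eq})$. It too cannot be extended over $0$: such an extension would lift through the covering $\pi$ to a curve on $\C^\times$ restricting to $\beta$ on $(0,\epsilon)$, and applying Proposition \ref{totrmg1} to local inverses of $\pi$ shows this lift is an RMG curve extending $\beta$ over $0$ — contradicting the previous paragraph. Hence $(\rat_n^{eq},\gamma_{L^2}^{eq})$ carries a maximal RMG curve defined only on a bounded interval, i.e.\ its intrinsic RMG flow is incomplete. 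As indicated above, the main (indeed only) delicate step is the completeness principle for the compact surface $(S^2,\ol\gamma_n)$, whose proof hinges on $C^3$ regularity of the metric making the RMG ODE locally Lipschitz on the tangent bundle.
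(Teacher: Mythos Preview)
Your proof is correct and follows essentially the same approach as the paper's: use Proposition~\ref{gamreg} to get $C^3$ regularity of $\ol\gamma_n$ on $S^2$, deduce that the RMG ODE has $C^1$ (hence locally Lipschitz) right-hand side so that an RMG curve through $a=0$ exists and is unique, then push its restriction to $\C^\times$ down via the holomorphic local isometry $\pi$. You are somewhat more explicit than the paper in justifying why the projected curve cannot be extended (via lifting through the covering and invoking uniqueness on $S^2$), but the underlying argument is the same.
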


\begin{proof} Assume $n\geq 5$. Then $\ol\gamma_n$ is $C^3$, so its Ricci form is $C^1$. Hence, by standard 
existence and uniqueness theory of ODEs, the RMG flow is globally well-defined on $(S^2,\ol\gamma_n)$. In particular, there
is an RMG curve $a:(-\eps,\eps)\ra S^2$ with $a(0)=0$ and $\dot{a}(0)=1$. Consider the image of $a:(-\eps,0)\ra S^2$ under
the projection $\pi:\C^\times\ra\rat_n^{eq}$. By definition, $\pi$ is a holomorphic isometry, so $\pi\circ a$ is an RMG
curve in $\rat_n^{eq}$, which reaches the singular point $a=0$ in finite time. Hence intrinsic RMG flow in $\rat_n^{eq}$ is
incomplete.
\end{proof}

\begin{remark} By resorting to a case-by-case analysis of the flow on $\rat_n^{eq}$ itself, one can extend the 
conclusion of Corollary \ref{fagita} to all $n\geq 2$ \cite{alq-thesis}. The case $n=2$ is considered below, see Proposition 
\ref{lntsmc}.
\end{remark}

As we have remarked, the extrinsic RMG flow on a totally geodesic complex submanifold of a k\"ahler manifold does not, in
general, coincide with its intrinsic RMG flow, so we cannot conclude from Corollary \ref{fagita} that $(\rat_n,\gamma_{L^2})$
is RMG incomplete for $n\geq 5$: this would follow if the {\em extrinsic} RMG flow on $\rat_n^{eq}$ were incomplete.
Remarkably, although we have little information about the $L^2$ metric on $\rat_2$, 
we have enough to prove that the extrinsic RMG flow on $\rat_2^{eq}$ is {\em complete}. This follows from the following
formula for the restriction of the Ricci form to $\rat_n^{eq}$.

\begin{prop}
\label{lrnaghf}
 Let $\rho\rvert$ be the restriction of the Ricci form $\rho$ of $(\rat_n,\gamma_{L^2})$ to $\rat_n^{eq}$
(that is, $\rho\rvert=\iota^*\rho$  where 
$\iota:\rat_n^{eq}\ra\rat_n$ denotes inclusion) and $\rho^{eq}$ be the intrinsic Ricci form on $(\rat_n^{eq},\gamma_{L^2}^{eq})$. 
Then $\rho\rvert=\d\aa\rvert$ and $\rho^{eq}=\d\aa^{eq}$ where
\ben
\aa\rvert&=&-\left(\sum_{j=0}^{2n}\frac{|\chi|F_j'(\chi)}{2 F_j(\chi)}\right)d\psi,\\
\aa^{eq}&=&-\frac{\chi F_n'(\chi)}{2 F_n(\chi)}\: d\psi,
\een 
and
$$
F_j(\chi)=16\pi\int_0^\infty\frac{s^j}{(1+\chi^2 s^n)^2}\frac{ds}{(1+s)^2}.
$$
The coordinate $\chi e^{i\psi}$ on $\rat_n^{eq}\equiv \C^\times$ corresponds to the rational map $W(z)=\chi e^{i\psi} z^n$.
\end{prop}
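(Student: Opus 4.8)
The plan is to work in one explicit holomorphic chart on $\rat_n$ covering a neighbourhood of $\rat_n^{eq}$, adapted to the linearisation of the equivariance circle action, and to exploit the identity $\iota^*(\partial\bar\partial u)=\partial\bar\partial(\iota^*u)$ valid for a holomorphic embedding $\iota$. Since, in holomorphic coordinates, $\rho=-i\,\partial\bar\partial\log\det\gamma_{L^2}$ (the determinant being of the Hermitian coefficient matrix), this reduces $\rho\rvert=\iota^*\rho$ to the evaluation of $\det\gamma_{L^2}$ only \emph{along} $\rat_n^{eq}$, which is controlled by one-variable integrals; no knowledge of $\gamma_{L^2}$ transverse to $\rat_n^{eq}$ is required.

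First I would fix the chart. Writing a degree-$n$ map as $W=P/Q$ with $\deg P,\deg Q\le n$ and $Q(0)=1$ gives holomorphic coordinates $(a_0,\dots,a_n,b_1,\dots,b_n)$ on a neighbourhood of $\rat_n^{eq}$, in which $\rat_n^{eq}=\{a_0=\dots=a_{n-1}=b_1=\dots=b_n=0\}$ with tangential coordinate $c=a_n$ (so $\chi=|c|$, $\psi=\arg c$). Linearising $W=cz^n+(\delta P-cz^n\delta Q)+\cdots$ at a point of $\rat_n^{eq}$ shows that, restricted to that submanifold, the coordinate vector fields are the variations $\partial_{a_j}\leftrightarrow z^{j}$ $(0\le j\le n)$ and $\partial_{b_j}\leftrightarrow -c\,z^{n+j}$ $(1\le j\le n)$. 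Under $W(z)\mapsto e^{in\alpha}W(e^{-i\alpha}z)$ the variation $z^{k}$ has weight $n-k$, so along $\rat_n^{eq}$ these $2n+1$ coordinate fields are eigenvectors of the linearised action with the $2n+1$ distinct weights $n,n-1,\dots,-n$. Since $\gamma_{L^2}$ is invariant under this \emph{holomorphic} circle action, its Hermitian form is therefore diagonal in these coordinates at every point of $\rat_n^{eq}$.

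Next I would compute the diagonal entries. By definition of the $L^2$ metric, the squared norm of the variation $z^{k}$ at $W=cz^n$ is a fixed positive constant times $\int_{\C}|z|^{2k}(1+\chi^2|z|^{2n})^{-2}(1+|z|^2)^{-2}\,\tfrac{i}{2}\,dz\wedge d\bar z$; passing to polar coordinates and substituting $s=|z|^2$ turns this into $F_k(\chi)$, an integral that converges precisely for $0\le k\le 2n$ (this is why $\gamma_{L^2}$ is smooth along $\rat_n^{eq}$, and why only the indices $0,\dots,2n$ appear). Hence along $\rat_n^{eq}$, $\det\gamma_{L^2}$ is a positive constant times $\chi^{2n}\prod_{k=0}^{2n}F_k(\chi)$ (the $\chi^{2n}$ coming from the $|c|^2=\chi^2$ factors in the $\partial_{b_j}$ norms), while the induced metric on $\rat_n^{eq}$ in the coordinate $c$ is a constant times $F_n(\chi)\,dc\,d\bar c$, with determinant a constant times $F_n(\chi)$.

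Finally I would assemble the two Ricci forms. By the reduction above, $\rho\rvert=-i\,\partial_c\partial_{\bar c}\bigl(\log\det\gamma_{L^2}\big|_{\rat_n^{eq}}\bigr)\,dc\wedge d\bar c$; as $\log\chi$ is harmonic on $\C^{\times}$, the $\chi^{2n}$ factor and all multiplicative constants disappear, leaving $\rho\rvert=-i\sum_{j=0}^{2n}\partial_c\partial_{\bar c}\log F_j(\chi)\,dc\wedge d\bar c$, and likewise $\rho^{eq}=-i\,\partial_c\partial_{\bar c}\log F_n(\chi)\,dc\wedge d\bar c$. Using $\partial_c\partial_{\bar c}h(\chi)=\tfrac{1}{4\chi}\tfrac{d}{d\chi}\bigl(\chi h'(\chi)\bigr)$ and $dc\wedge d\bar c=-2i\chi\,d\chi\wedge d\psi$ one checks in a line that $-i\,\partial_c\partial_{\bar c}\log F(\chi)\,dc\wedge d\bar c=\d\bigl(-\tfrac{\chi F'(\chi)}{2F(\chi)}\,d\psi\bigr)$; applying this to each $F_j$ and summing gives $\rho\rvert=\d\aa\rvert$ and $\rho^{eq}=\d\aa^{eq}$ with the stated primitives (note that $\aa^{eq}$ is precisely the $j=n$ summand of $\aa\rvert$, the remaining summands being the curvature forms of the normal line bundles of $\rat_n^{eq}$ in $\rat_n$). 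The one step that genuinely needs care is the diagonalisation: one must verify that the restricted coordinate vector fields really do span distinct weight spaces of the linearised action, since that is exactly what lets circle-invariance annihilate every off-diagonal component of $\gamma_{L^2}$ along $\rat_n^{eq}$ — and hence what makes $\iota^*\rho$ computable from the one-variable data $F_0,\dots,F_{2n}$ alone.
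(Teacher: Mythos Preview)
Your proposal is correct and follows essentially the same route as the paper: both use the chart $W=(a_0+\cdots+a_nz^n)/(1+a_{n+1}z+\cdots+a_{2n}z^n)$, the identity $\rho=-i\,\partial\bar\partial\log\det\gamma_{\cdot\bar\cdot}$, and the fact that $\iota^*$ commutes with $\partial\bar\partial$ for a holomorphic inclusion, reducing everything to evaluating the diagonal entries $\gamma_{j\bar j}\rvert=F_j(\chi)$ (or $\chi^2F_j(\chi)$) along $\rat_n^{eq}$. The only real difference is cosmetic: the paper establishes $\gamma_{j\bar k}\rvert=0$ for $j\neq k$ by direct inspection of the integral $\int_\C z^j\bar z^k(\cdots)$ (which vanishes by angular integration), whereas you phrase this as the coordinate fields lying in distinct weight spaces of the equivariance circle action---these are of course the same observation.
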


\begin{proof} $\rat_n^{eq}$ lies entirely within the coordinate chart on $\rat_n$ on which
\beq
W(z)=\frac{a_0+a_1z+\cdots+a_nz^n}{1+a_{n+1}z+\cdots+a_{2n}z^n}.
\eeq
It is the surface $a_0=\cdots=a_{n-1}=a_{n+1}=\cdots=a_{2n}=0$, $a_n=\chi e^{i\psi}\in\C^\times$. Let 
$G(a_0,\ldots,a_{2n})=\log\det\gamma_{\cdot\bar\cdot}$ where $\gamma_{\cdot\bar\cdot}$ denotes the hermitian matrix of metric
coefficients of $\gamma$ with respect to the local complex coordinates $a_j$. Then \cite[p.\ 82]{bes},
\beq\label{agsmc}
\rho=-i\cd\bar\cd G,
\eeq
so
\beq
\rho\rvert=-i\iota^*\cd\bar\cd G=-i\cd\bar\cd(G\circ\iota)
\eeq
since the inclusion is holomorphic. Now
\beq
\gamma_{j\bar{k}}=16\int_\C \frac{1}{(1+|W(z)|^2)^2}\frac{\cd W}{\cd a_j}\ol{\left(\frac{\cd W}{\cd a_k}\right)}
\frac{dzd\bar{z}}{(1+|z|^2)^2}
\eeq
and
\beq
\left.\frac{\cd W}{\cd a_j}\right|=\left\{\begin{array}{cc}
z^j&0\leq j\leq n\\
-\chi e^{i\psi} z^j&n+1\leq j\leq 2n
\end{array}\right.
\eeq
where the vertical stroke denotes evaluation at the rational map $\chi e^{i\psi}z^n$. It follows that 
$\gamma_{j\bar{k}}\rvert=0$ if $j\neq k$, and that
\beq
\gamma_{j\bar{j}}\rvert=\left\{\begin{array}{cc}
F_j(\chi)&0\leq j\leq n\\
\chi^2 F_j(\chi)&n+1\leq j\leq 2n.
\end{array}\right.
\eeq
Hence
\beq
G\circ\iota=n\log \chi^2+\sum_{j=0}^{2n}\log F_j(\chi),
\eeq
and the formula for $\rho\rvert$ immediately follows. To obtain the formula for $\rho^{eq}$ we note that the
induced metric on $\rat_n^{eq}$ is 
\beq
\gamma_{L^2}^{eq}=\gamma_{n\bar{n}}\rvert da_n d\bar{a}_n=F_n(\chi)dc d\bar{c},
\eeq
where $c=\chi e^{i\psi}$, 
and use (\ref{agsmc}).
\end{proof}

Since the integrand in $F_j$ is rational, one can, in principle, evaluate each of these functions as an explicit
function of $\chi$. The expressions involved become very complicated as $n$ grows large, however. 

Both extrinsic and intrinsic RMG flow on $\rat_n^{eq}$ are governed by a lagrangian of the form
\beq
L=\frac12 F_n(\chi)(\dot\chi^2+\chi^2\dot\psi^2)-a(\chi)\dot\psi
\eeq
where $a(\chi)=\aa^{eq}(\cd/\cd\psi)$ or $a(\chi)=\aa\rvert(\cd/\cd\psi)$ respectively. In each case, both the momentum
conjugate to $\psi$,
\beq
P=\chi^2 F_n(\chi)\dot\psi-a(\chi)
\eeq
and the kinetic energy
\beq
E=\frac12F_n(\chi)(\dot\chi^2+\chi^2\dot\psi^2)=\frac12F_n(\chi)\dot\chi^2+\frac{(P+a(\chi))^2}{2\chi^2 F_n(\chi)}
\eeq
are conserved. This is equivalent to motion on $(0,\infty)$ with the metric $F_n(\chi)d\chi^2$ in the effective
potential
\beq
V_P(\chi)=\frac{(P+a(\chi))^2}{2\chi^2 F_n(\chi)}.
\eeq
Since $(0,\infty)$ has finite total length with respect to this metric \cite{sadspe}, the flow is complete if and only if,
for each $P\in \R$, the effective potential is unbounded above as $\chi\ra 0$ and $\chi\ra\infty$. Both the intrinsic and
extrinsic RMG flows are symmetric under $c=\chi e^{i\psi}\mapsto 1/c$, so in fact it suffices to consider $V_P(\chi)$ in 
a neighbourhood of $0$. 

\begin{prop}\label{lntsmc} 
$\rat_2^{eq}$ is extrinsically RMG complete with respect to the $L^2$ metric, but intrinsically RMG incomplete.
\end{prop}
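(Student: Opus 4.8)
The plan is to exploit the one-dimensional reduction already set up above. Both the extrinsic and the intrinsic RMG flow on $\rat_2^{eq}$ descend to the motion of a particle on $(0,\infty)$ in the metric $F_2(\chi)\,d\chi^2$ under the effective potential $V_P(\chi)=(P+a(\chi))^2/(2\chi^2F_2(\chi))$, where $P\in\R$ is the conserved momentum conjugate to $\psi$ and $a(\chi)=\aa\rvert(\cd/\cd\psi)$ in the extrinsic case, $a(\chi)=\aa^{eq}(\cd/\cd\psi)$ in the intrinsic case. Since $(0,\infty)$ has finite length in the metric $F_2(\chi)\,d\chi^2$ and each flow is symmetric under $\chi\mapsto 1/\chi$, the flow is complete iff $V_P(\chi)\to+\infty$ as $\chi\to0$ for every $P$, and is incomplete as soon as $V_P$ is bounded above near $\chi=0$ for a single $P$ (an orbit of energy exceeding that bound then reaches $\chi=0$ in finite time, because $\sqrt{F_2}$ is integrable there). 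So everything comes down to the $\chi\to0$ asymptotics of $F_2(\chi)$ and of $a\rvert(\chi)$, $a^{eq}(\chi)$.

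The first step is to pin down the small-$\chi$ behaviour of $F_j(\chi)=16\pi\int_0^\infty s^j(1+\chi^2s^2)^{-2}(1+s)^{-2}\,ds$ for $j=0,\dots,4$. Rescaling $s=u/\chi$ gives $F_j(\chi)=16\pi\,\chi^{1-j}\int_0^\infty u^j(1+u^2)^{-2}(\chi+u)^{-2}\,du$, and dominated convergence yields $F_0\to16\pi$, while for $j=2,3,4$ the limiting integrand is integrable and $F_j\sim16\pi\,\chi^{1-j}\int_0^\infty u^{j-2}(1+u^2)^{-2}\,du$, i.e.\ $F_2\sim4\pi^2/\chi$, $F_3\sim8\pi/\chi^2$, $F_4\sim4\pi^2/\chi^3$. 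The case $j=1$ is logarithmic: the limiting integrand fails to be integrable at $u=0$, and a direct estimate gives $F_1(\chi)=16\pi\log(1/\chi)+O(1)$. Consequently $\log\prod_{j=0}^4F_j(\chi)=-6\log\chi+\log\log(1/\chi)+O(1)$, whence $a\rvert(\chi)=-\tfrac{\chi}{2}\tfrac{d}{d\chi}\log\prod_jF_j(\chi)=3+\tfrac{1}{2\log(1/\chi)}+o\!\left(\tfrac{1}{\log(1/\chi)}\right)$. One also needs the subleading term of $F_2$, namely $F_2(\chi)=4\pi^2/\chi-32\pi\log(1/\chi)+O(1)$, which gives $a^{eq}(\chi)=-\tfrac{\chi}{2}\tfrac{d}{d\chi}\log F_2(\chi)=\tfrac12+\tfrac{4}{\pi}\chi\log(1/\chi)+o(\chi\log(1/\chi))$.

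Feeding these into $V_P$ gives both statements. Since $\chi^2F_2(\chi)\sim4\pi^2\chi\to0$, the intrinsic potential satisfies $V^{eq}_P(\chi)\sim(P+\tfrac12)^2/(8\pi^2\chi)\to\infty$ for $P\ne-\tfrac12$, but at the borderline value $P=-\tfrac12$ one has $(a^{eq}(\chi)-\tfrac12)^2\sim(4/\pi)^2\chi^2(\log(1/\chi))^2$, so $V^{eq}_{-1/2}(\chi)\sim\tfrac{2}{\pi^4}\chi(\log(1/\chi))^2\to0$; being positive and continuous on $(0,\infty)$ it is bounded above on $(0,\chi_0]$, so the intrinsic RMG flow is incomplete. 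For the extrinsic potential $a\rvert(\chi)\to3$, so $V_P\rvert(\chi)\sim(P+3)^2/(8\pi^2\chi)\to\infty$ for $P\ne-3$, and at $P=-3$ the cancellation is rescued by $(a\rvert(\chi)-3)^2\sim(2\log(1/\chi))^{-2}$, giving $V_{-3}\rvert(\chi)\sim\tfrac{1}{32\pi^2}\,\chi^{-1}(\log(1/\chi))^{-2}\to+\infty$ as well. Hence $V_P\rvert(\chi)\to\infty$ as $\chi\to0$ for every $P$, and by the $\chi\mapsto1/\chi$ symmetry the same holds as $\chi\to\infty$; so the extrinsic flow is complete.

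The real work is concentrated in the two borderline momenta, $P=-\tfrac12$ (intrinsic) and $P=-3$ (extrinsic), where the leading term of $P+a(\chi)$ vanishes and the naive estimate is $0/0$. Handling them forces one to track genuine next-order asymptotics: for the extrinsic flow one must keep the $\log\log(1/\chi)$ contribution to $\log\prod_jF_j$ — which originates entirely in the single logarithmically divergent factor $F_1$ — because it is precisely this term that makes $V_{-3}\rvert$ diverge, and it does so only marginally, like $\chi^{-1}(\log(1/\chi))^{-2}$; for the intrinsic flow one needs the $-32\pi\log(1/\chi)$ correction to $F_2$. Some extra care is also needed to justify the term-by-term asymptotics of the $F_j$ (exhibiting uniform integrable majorants for the dominated convergence arguments and controlling the $O(1)$ remainders), but that part is routine.
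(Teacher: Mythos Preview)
Your proof is correct and follows exactly the paper's approach: reduce to the behaviour of the effective potential $V_P(\chi)$ near $\chi=0$, identify the borderline momenta where the leading term of $P+a(\chi)$ cancels, and resolve those cases by the subleading asymptotics. The only difference is that the paper obtains the three key limits (for $\chi F_2$, $(G_2-\tfrac12)/(\chi\log\chi)$, and $(G-3)\log\chi$) via Maple, whereas you derive them by hand from the substitution $s=u/\chi$; a few numerical constants differ from the paper's stated values, but this has no bearing on the argument.
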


\begin{proof} As argued above, we must show that the effective potential $V_P$ is unbounded above as $\chi\ra 0$ for all
$P$, in the case of extrinsic flow, and is bounded as $\chi\ra 0$ for at least one choice of $P$ in the case of intrinsic flow.
Let $G_j(\chi)=-\frac12\chi F_j'(\chi)/F_j(\chi)$, and $G(\chi)=\sum_{j=0}^4 G_j(\chi)$. Then the effective potentials 
governing the extrinsic and intrinsic RMG flows are
\beq
V_P^{ext}(\chi)=\frac{(P+G(\chi))^2}{2\chi^2 F_2(\chi)},\qquad
V_P^{int}(\chi)=\frac{(P+G_2(\chi))^2}{2\chi^2 F_2(\chi)},\qquad
\eeq
respectively. With the aid of Maple, for example, one can obtain the following limits:
\beq
\lim_{\chi\ra 0}\chi F_2(\chi)=4\pi,\qquad
\lim_{\chi\ra 0}\frac{G_2(\chi)-\frac12}{\chi\log\chi}=-\frac4\pi,\qquad
\lim_{\chi\ra 0}(G(\chi)-3)\log\chi=-\frac12.
\eeq
It follows that, for all $P\neq -3$,
\beq
\lim_{\chi\ra 0}\chi V^{ext}_P(\chi)=\frac{(P+3)^2}{8\pi}
\eeq
and
\beq
\lim_{\chi\ra 0}\chi(\log\chi)^2 V^{ext}_{-3}(\chi)=\frac{1}{32\pi}.
\eeq
Hence, for all $P$, $V_P^{ext}$ is unbounded above as $\chi\ra 0$. But
\beq
\lim_{\chi\ra 0}V^{int}_{1/2}(\chi)=0
\eeq
so $V^{int}_{1/2}$ is bounded. 
\end{proof}

Numerical analysis of the functions $F_j(\chi)$ suggests that $\rat_n^{eq}$ is likely to be extrinsically RMG complete
for all $n\geq 2$, but we have been unable to prove this so far. Since the process of a single isolated lump collapsing to a
singular spike during RMG flow is prohibited by curvature effects in $\rat_1$, and the same is true for a pair of 
equivariant coincident lumps in $\rat_2$, it is plausible that $\rat_n$ should be RMG complete for all $n\geq 2$, despite 
being geodesically incomplete.

\section{Concluding remarks}
\label{sec:conc}
\news

In this paper we have studied Ricci magnetic geodesic motion on the moduli spaces of abelian Higgs vortices and $\CP^1$
lumps. In so doing we have established that two assertions and one conjecture about this kind of soliton dynamics in the
current literature are false. First, contrary to a claim of Collie and Tong
\cite{colton}, RMG motion on the vortex moduli space
does not coincide with the magnetic geodesic flow proposed earlier by Kim and Lee \cite{kimlee} (and, furthermore, we
have shown that the Kim Lee flow is globally ill-defined). Second, we have shown that, while RMG flow
localizes to fixed point sets of groups of holomorphic isometries, the flow does not, as claimed by one of us
and Krusch \cite{kruspe-vortex}, coincide with the intrinsic RMG flow on the fixed point set. We have seen that on both
the submanifold of centred hyperbolic two-vortices and the space of rotationally equivariant two-lumps, the
intrinsic and extrinsic RMG flows are qualitatively different from one another. 
This aspect of RMG flow is conceptually troubling: since it arises by restricting an infinite dimensional 
dynamical system (a field theory) to a finite dimensional submanifold, it is somewhat strange that further symmetry
reduction is not self-consistent.
Third, we have shown that,
contrary to a conjecture in \cite{kruspe-vortex}, there exist k\"ahler manifolds which are
geodesically incomplete but RMG complete: in fact $(\rat_1,\gamma_{L^2})$
is one such manifold. 

Several interesting open questions remain. Can one, by adapting the methods of Stuart for example \cite{stu-vortex}, prove 
rigorously Collie and Tong's conjecture
that Chern-Simons vortex dynamics is controlled by RMG motion in $M_n$, in the small $\kappa$ (and small energy) 
limit? Or can one rigorously derive some alternative magnetic geodesic flow on $M_n$? Can one develop a point-vortex 
formalism for well-separated Chern-Simons vortices, analogous to the one for standard vortices \cite{spe-static,manspe}? This
would provide formal evidence for, or against, Collie and Tong's conjecture. Treating RMG flow as an interesting dynamical
system on k\"ahler manifolds, can one establish geometric criteria which ensure that RMG completeness implies
geodesic completeness? Can one find examples of RMG complete but geodesically incomplete manifolds with bounded scalar
curvature? Or bounded Ricci curvature? Note that $\rat_1$ and the surface of revolution described in Remark \ref{iblwltrmd}
both have unbounded scalar curvature. 


\subsection*{Appendix: proof of Lemma \ref{lem1}}
\label{appendixA}
\news

\appendix
\renewcommand{\theequation}{A.\arabic{equation}}

 One can obtain using, for example, Maple the following asymptotic formulae for  $F_i(\lambda )$, given in (\ref{FFF}), with respect to the $L^2$ metric on $\rat_1$ as $\lambda \rightarrow \infty $:
\begin{align}
F_1(\lambda )&=\frac{\lambda ^4}{\log \lambda }\;\biggl[ a_1 +\frac{a_2}{\log \lambda } +\frac{a_3}{(\log \lambda )^2}+O\biggl(\frac{1}{(\log \lambda)^3 }\biggr)\biggr],\nonumber\\
F_2(\lambda )&=\frac{\lambda ^4}{\log \lambda }\;\biggl[ b_1 +\frac{b_2}{\log \lambda } +\frac{b_3}{(\log \lambda )^2}+O\biggl(\frac{1}{(\log \lambda)^3 }\biggr)\biggr],\label{asymF}\\
F_3(\lambda )&=\frac{\lambda ^4}{\log \lambda }\;\biggl[ c_1 +\frac{c_2}{\log \lambda } +\frac{c_3}{(\log \lambda )^2}+O\biggl(\frac{1}{(\log \lambda)^3 }\biggr)\biggr],\nonumber
\end{align}
where
\begin{align}
a_1&=\frac{4}{\pi },\quad  & a_2=\frac{2}{\pi } [1-2 \log 2],\quad & a_3=\frac{1}{\pi } [1-4 \log 2 +4 (\log 2)^2],\nonumber\\
b_1&=\frac{16}{\pi },\quad & b_2=\frac{2}{\pi } [3-8 \log 2],\quad & b_3=\frac{1}{\pi } [2-12 \log 2 +16 (\log 2)^2],\\
c_1&=\frac{16}{\pi },\quad & c_2=\frac{4}{\pi } [1-4 \log 2],\quad & c_3=\frac{1}{\pi } [1-32 \log 2 +64 (\log 2)^2].\nonumber
\end{align}
It follows from (\ref{Z}) and (\ref{asymF}) that
\begin{equation}
Z(\lambda ,\theta )=Z_0(\lambda ,\theta )+Z_{\text{error}}(\lambda ,\theta ),
\end{equation}
where
\begin{align}
Z_0(\lambda ,\theta )&=\frac{\lambda ^4}{\log \lambda }\biggl[(4 a_1 \cos^2 \theta  +2 b_1 \cos \theta +c_1)+\frac{1}{\log \lambda } (4 a_2 \cos^2 \theta  +2 b_2 \cos \theta +c_2)\nonumber\\
&\:\:+\frac{1}{(\log \lambda)^2}  (4 a_3 \cos^2 \theta  +2 b_3 \cos \theta +c_3)\biggr],
\end{align}
and $Z_{\text{error}}(\lambda ,\theta )$ satisfies the following estimate: there exist  $c_*, \lambda _* >0$ such that for all $\lambda \geq \lambda _*$,
\begin{equation}
\bigl | Z_{\text{error}}(\lambda ,\theta )\bigr |  < \frac{c_* \lambda ^4}{(\log \lambda )^4}, \quad \forall \; \theta \in \mathbb{R}.
\end{equation}
Hence, it suffices to prove that $Z_0(\lambda ,\theta )$ satisfies an estimate of the form (\ref{Z1}).

Defining $\tau =1+\cos \theta $ and $x=1/\log \lambda $. Then
\begin{equation}\label{ZZZ}
\frac{\log \lambda }{\lambda ^4}\; Z_0(\lambda ,\theta )=P_x(\tau ),
\end{equation}
where 
\begin{equation}
P_x(\tau )=\alpha _1(x) \tau ^2+\alpha _2(x) \tau  +\alpha _3(x),
\end{equation}
and the coefficients $\alpha _1$, $\alpha _2$ and $\alpha _3$ are given by
\begin{align}
\alpha _1(x)&=4(a_1+a_2 x+a_3 x^2),\nonumber\\
\alpha _2(x)&=2(b_2-4 a_2) x+2(b_3-4a_3) x^2,\label{123}\\
\alpha _3(x)&=(4a_3-2b_3+c_3) x^2.\nonumber
\end{align}
Since $\alpha _1(0) >0$, then there exists $x_*>0$ such that for all $x\in (-x_*,x_*)$, $P_x(\tau )$ has a minimum, occurs at $\tau =\tau _*$, where $dP_x(\tau )/d\tau \Bigl\lvert_{\tau =\tau _*}=0$, that is,
\begin{equation}
\tau _*(x)=-\frac{1}{2}\frac{\alpha _2(x)}{ \alpha _1(x)}.
\end{equation}
So,  for all $x\in (-x_*,x_*)$, the minimum value of $P_x(\tau )$ is
\begin{equation}\label{Pxtstar}
P_x(\tau _*(x))=-\frac{1}{4 \alpha _1(x)} [\alpha _2(x)^2-4 \alpha _1(x) \alpha _3(x)] .
\end{equation}
Note that $P_x(\tau _*(x))$ is a rational function of $x$, and hence is analytic. Using (\ref{123}), one  finds that
\begin{equation}
P_0(\tau _*(0))=0,\qquad \frac{d}{dx}P_x(\tau _*(x))\Bigl\lvert _{x=0} =0,
\end{equation}
and
\begin{equation}
\frac{d^2}{dx^2} P_x(\tau _*(x))\Bigl\lvert _{x=0}=-\frac{1}{8 a_1}[ (b_2-4 a_2)^2 -16 a_1 (4 a_3-2b_3 +c_3)] >0.
\end{equation}
Thus, there exist $ \varepsilon>0$ and   $0<x_0 <x_*$ such that for all $x\in (-x_0,x_0)$,
\begin{equation}
P_x(\tau _*(x)) \geq  \varepsilon \; x^2.
\end{equation}
Hence, for all $x\in (0,x_0)$,
\begin{equation}
P_x(\tau ) \geq \varepsilon  \; x^2,\qquad \forall \;\tau \in \mathbb{R}.
\end{equation}
Hence, it follows from (\ref{ZZZ}) that for all $\lambda > e^{1/x_0}$, 
\begin{equation}
\frac{\log \lambda }{\lambda ^4}\; Z_0(\lambda , \theta )= P_x(\tau )  \geq \varepsilon \; x^2 = \frac{\varepsilon }{(\log \lambda)^2 },\quad \forall \:\theta \in \mathbb{R},
\end{equation}
which implies that $Z_0(\lambda , \theta )$ satisfies the estimate (\ref{Z1}).
\hfill $\Box$


\subsection*{Acknowledgements}

 The work of JMS was supported by the UK
Engineering and Physical Sciences Research Council, and that of
LSA by a scholarship from King Abdulaziz University (Jeddah, KSA).

\end{document}